\theoremstyle{plain}
\numberwithin{equation}{section}
\newtheorem{theorem}{Theorem}[section]
\newtheorem{proposition}[theorem]{Proposition}
\newtheorem{lemma}[theorem]{Lemma}
\newtheorem{remark}[theorem]{Remark}
\definecolor{darkred}{rgb}{0.8,0,0}
\definecolor{darkblue}{rgb}{0,0,0.7}
\definecolor{darkgreen}{rgb}{0,0.4,0}
\newcommand{\R}{{\mathbb R}}
\newcommand{\W}{{\mathcal W}}
\newcommand{\MMM}{\color{black}}
\newcommand{\KKK}{\color{black}}
\newcommand{\tr}{{\rm Tr}}
\newcommand{\xxi}{{\mbox{\boldmath$\xi$}}}
\def\u{\mathbf{u}}
\def\uu{\mathbf{uu}}
\def\vv{\mathbf{v}}
\def\yy{\mathbf{y}}
\def\n{\mathbf{n}}
\def\xx{\mathbf{x}}
\def\xxi{\boldsymbol{\xi}}
\def\uu{\mathbf{u}}
\def\vv{\mathbf{v}}
\def\ww{\mathbf{w}}
\def\zz{\mathbf{z}}
\def\eps{\varepsilon}
\def\R{{\mathbb R}}
\def\h_j{{\mathcal h_j}}
\def\h_j{\hspace{5 pt}}
\def\eps{\varepsilon}
\def\R{{\mathbb R}}
\def\h_j{{\mathcal h_j}}
\def\argmin{\mathop{{\rm argmin}}\nolimits}
\def\spt{\mathop{{\rm spt}}\nolimits}
\def\Tr{\mathop{{\rm Tr}}\nolimits}
\def\dv{\mathop{{\rm div}}\nolimits}
\def\sym{\mathop{{\rm sym}}\nolimits}
\def\capacity{\mathop{{\rm cap}}\nolimits}
\def\Capacity{\mathop{{\rm Cap}}\nolimits}
\def\u{\mathbf{u}}
\def\v{\mathbf{v}}
\def\v{{\bf v}}
\def\w{{\bf w}}
\def\x{{\bf x}}
\def\wconv{\rightharpoonup}
 \newenvironment{proofad1}{\removelastskip\par\medskip
\noindent{\textbf {Proof of Theorem \ref{mainth1comp}}.}
\rm}{\penalty-20\null\hfill$\square$\par\medbreak} 
\renewcommand{\epsilon}{\varepsilon}
\newcommand{\beeq}{\begin{equation}}
\newcommand{\eneq}{\end{equation}}
\newcommand{\bear}{\begin{array}}
\newcommand{\enar}{\end{array}}
\newcommand{\bema}{\begin{displaymath}}
\newcommand{\enma}{\end{displaymath}}
\newcommand{\beea}{\begin{eqnarray}}
\newcommand{\enea}{\end{eqnarray}}
\newcommand{\om}{\Omega}
\newcommand{\lab}[1]{ \label{#1} }
\def\wconv{\rightharpoonup}
\title[]{Limit of nonlinear Signorini problems for incompressible materials}
   \author[Edoardo Mainini, Danilo Percivale, Robertus van der Putten]{Edoardo Mainini, Danilo Percivale, Robertus van der Putten}
 \address{}
 \email{}
 \address{Universit\`{a} degli Studi di Genova, Dipartimento di   Ingegneria Meccanica, Energetica, Gestionale e dei Trasporti (DIME),
  Via Opera Pia 15, 16129 Genova, Italy}
  \email{edoardo.mainini@unige.it}
  \email{danilo.percivale@unige.it}
  \email{robertus.van.der.putten@unige.it}
\subjclass{}
\begin{document}
 \maketitle
\begin{abstract} The paper is devoted to the linearization of the non linear Signorini functional in the incompressible case. The limit functional, in the sense of $\Gamma$-convergence, may coincide with the expected one only in some particular cases. 
\end{abstract}

\begin{flushleft}
  {\bf AMS Classification Numbers : 49J45 · 74K30 · 74K35 · 74R10\,} \\
  {\bf Key Words:\,} Calculus of Variations; Signorini problem; 
Linear elasticity; Finite elasticity;
Gamma-convergence;  Unilateral constraint\\
\end{flushleft}
\vskip0.5cm
\maketitle

\section{Introduction} 
{In linear elastostatics, the classical Signorini problem \cite{Si} requires to find the  equilibrium of an elastic body subject to external forces and  resting on a rigid support $E\!\subset\! \partial\om$ in its reference configuration $\om$.  Precisely, if $\om$ is subject to suitable volume and surface forces ${\mathbf f}:\Omega\rightarrow \R^3$ and  ${\mathbf g}:\partial\Omega\!\setminus\! E\rightarrow \R^3$ such that
\begin{equation} \lab{load}\mathcal L(\u)\,:=\,\int_\om\mathbf f\cdot\u\ d\x+\int_{\partial\Omega\setminus E}\mathbf g\cdot \u\ d{\mathcal H}^2
\end{equation}
is the load potential and $\mathbf u:\Omega\to\mathbb R^3$ is the displacement field, then, by assuming that $\mathcal H^2(E)\! >\! 0$, the variational formulation of the Signorini problem consists in finding a minimizer of the functional 
\begin{equation}\lab{linel}
\mathcal E(\u):= \displaystyle\int_\Omega\mathcal Q\big(\x,\mathbb E(\u)\big)\,d\x -\mathcal L(\u)
\end{equation}
among all $\u$ in the Sobolev space $ H^1(\om;\mathbb R^3)$ such that $\u\cdot\mathbf n\ge 0$ $\,\mathcal H^2$-\! a.e. on $E$ where $\mathbf n$ is the inward unit vector normal to $\partial\om$  and $\mathcal H^2$ is the two-dimensional Hausdorff measure. As usually happens, here
$ \mathbb E$  denotes the linear strain tensor, $\mathbb C$ represents the classical linear elasticity tensor and $$
\mathcal Q(\x,\mathbb E)\,:=\, \frac{1}{2}\,\mathbb E^T\mathbb C(\x)\,\mathbb E
$$  
 is the corresponding strain energy density (see \cite{Gu}).
A classical result (see \cite{F2}) states that a minimizer  exists  if $\mathcal L(\v)\!\le\! 0\,$ for
every infinitesimal rigid displacement $\v$ such that $\,\v\cdot\n\!\ge\! 0$ $\,\mathcal H^2$-\! a.e. on $E$ and  $\mathcal L(\v) \!=\! 0$ if and only if $\v\cdot\n \equiv 0$ $\,\mathcal H^2$-\! a.e. on $E$.
More recently a proper generalization of the latter formulation has been given by assuming that the set $E$ has positive Sobolev capacity 
and accordingly modifying the obstacle condition by requiring $\v\cdot\n\ge 0$ on $E$ up to a set of null capacity (shortly, q.e. on $E$): the existence of minimizers for this general setting was proved in \cite[Theorem 4.5]{BBGT}.
Although the original  formulation given in \cite{Si}  may look different from the generalized notion exploited in \cite{BBGT}, it can be shown (see Remark \ref{2.4}) that if the set $E\!\subset\!\partial\Omega$ is regular in an appropriate sense  then the two frameworks coincide.\\

In the  recent  paper \cite{MPTOB} it has been shown that, under sharp conditions on $\mathcal L$, there exists a sequence of (suitably rescaled) functionals $\mathcal G_h$ of finite elasticity such that $\inf \mathcal G_h$ converges to the minimum of $\mathcal E$ and that  in addition there are  examples in which this convergence fails.
The aim of this paper is to show that at least in the planar obstacle case, namely when $E\subset \partial\om\cap\{\x\cdot\mathbf e_3=0\}\not\equiv \emptyset$ (being $(\mathbf e_i),$ $i=1,2,3,$  the canonical basis of $\R^3$)  a similar result holds also for incompressible energy densities.
More in detail,  denoting by $\mathbf y:\Omega\to\mathbb R^3$  the deformation field and by $h> 0$ an adimensional  parameter,  we introduce \MMM a strain  energy density $\mathcal W:\Omega\times\mathbb R^{3\times3}\to [0,+\infty]$, which is frame indifferent and minimized at the identity, and \KKK the family of energy functionals 
\begin{equation*}
\displaystyle \mathcal F_h^I(\mathbf y):=h^{-2}\int_\om\mathcal W^I\big(\x,\nabla\mathbf y(\xx)\big)\,d\x
-h^{-1}\mathcal L(\mathbf y-\mathbf x)
\end{equation*}
where  $\mathcal L$ is defined as in \eqref{load} and $\mathcal W^I$ is the incompressible strain energy density (\MMM coinciding with $\mathcal W$ if  $\det \nabla \yy\equiv 1$ and set equal to $+\infty$ otherwise\KKK).
We define \[
\lab{Gh} \mathcal G_h^I(\mathbf y)=\left\{\begin{array}{ll}  \mathcal F_h^I(\mathbf y)& \ \ \hbox{if }\ 
{\mathbf y}\in \mathcal A
\vspace{0.1cm}\\
 \ +\infty &\ \ \hbox{otherwise in $H^1(\om;\R^3)$},\\
\end{array}\right.
\]
where $E$, the portion of the elastic body that is sensitive to the obstacle, is such that $\capacity E > 0$, and where $\mathcal A=\mathcal A(E)$ denotes the class of admissible deformations (i.e., those $\yy\in H^1(\om;\R^3)$ such that $\mathbf y\cdot\mathbf e_3\ge 0$ quasi-everywhere on $E$, i.e., up to sets of null capacity\KKK).
 Moreover we have to assume that 
 $
\mathcal L(\mathbf y-\mathbf x)\le 0
$
for every \MMM rigid deformation $\mathbf y\in \mathcal A$. Indeed, if there exists a rigid deformation $\overline\yy\in\mathcal A$ such that $\mathcal L(\overline\yy-\x)>0$, then   \KKK
 $$ \mathcal G_h^I(\overline\yy)=-h^{-1}\mathcal L(\overline\yy-\mathbf x)\to -\infty\ \text{as}\ h\to 0^+.$$
 Therefore, since a rigid deformation $\mathbf y(\x)=\mathbf R \x+\mathbf c$ with $\mathbf R\!\in\! SO(3),\ \mathbf c=(c_1,c_2,c_3)\in \mathbb R^3,$ belongs to $\mathcal A$ whenever  $c_3\ge  -\left(\mathbf R\x\right)_3$ on $E$, 
 we have to assume
 \begin{equation}\label{rc}
 \mathcal L( (\mathbf R - \mathbf I) \x+\mathbf c)\le 0\quad\mbox{for every $\mathbf R\in SO(3)$ and  every $\mathbf c\in \mathbb R^3$ s.t. $c_3\ge  -\left(\mathbf R\x\right)_3$ on $E$.}
 \end{equation}
In the main result of this  paper (Theorem \ref{mainth1comp} below) we show that if $\mathcal L$ satisfies the  necessary condition \eqref{rc}  together with $\mathcal L(\mathbf e_3) < 0$ and  \beeq\lab{shear} \mathcal L\left (\,(\mathbf R\x-\x)_1\,\mathbf e_{1}+ (\mathbf R\x-\x)_2\,\mathbf e_{2}\,\right )\le 0\quad  \forall\,\mathbf R\in SO(3),\eneq
and if $\mathcal W$ satisfies some standard assumptions to be detailed in the next section, then
{\color {blue}}
\[\displaystyle\lim_{h\to 0}\inf_{H^1(\om;\R^3)} \mathcal G_h^I=\min_{H^1(\om;\R^3)} \mathcal G^I,\]
where, with the notation $D^2$ for Hessian of $\mathcal W(\x,\cdot)$, the limit functional is characterized as
\[
\label{elfunc}
\displaystyle {\mathcal G}^I(\u):=
\left\{\begin{array}{ll}
\displaystyle \int_\om \mathcal Q^I(\x, \mathbb E(\u))\,d\x-\max_{\mathbf R\in{\mathcal S}_{\mathcal L, E}}{\mathcal L}(\mathbf R \u)
\quad 
& \ \ \hbox{if }\ 
\u\cdot\mathbf e_{3}\ge 0\ \hbox{quasi-everywhere on}\ E 
\vspace{0.1cm}
\\ 
 \ \ +\infty\quad &\ \ \hbox{otherwise in $H^1(\om;\R^3)$ }
\end{array}
\right.
\]
\begin{equation*}
\mathcal Q^I(\x,\mathbf F):=\left\{\begin{array}{ll}\vspace{0.2cm}
 \dfrac12\,\mathbf F^T\,D^2 {\mathcal W}(\x,\mathbf I)\,\mathbf F\qquad &\hbox{if }\; \tr\,\mathbf F=0\\
+\infty\ &\hbox{otherwise,}\\
\end{array}\right.
 \end{equation*}
\begin{equation*}
\label{kernel}
 \mathcal S_{\mathcal L,E}\,=\,\left\{\, \mathbf R\in SO(3): \ \mathcal L\left ( (\mathbf R- \mathbf I)\,\x\right )\,-\inf_{\xx\in E}\big(\mathbf (\mathbf R\xx)_3\big)\mathcal L(\mathbf e_3 )=0 \, \right\}.
\end{equation*}
\MMM
In general, $\mathcal G^I$ differs from the expected limit functional $\mathcal E^I$,  defined by replacing $\mathcal Q$ with $\mathcal Q^I$ in \eqref{linel}.
However,  under the hypotheses on $E$ and $\mathcal L$ that were previously detailed,  it has been proven in \cite{MPTOB} (see Lemma \ref{lemma alternative} below) that either 
$ \mathcal S_{\mathcal L ,E}\equiv \{\mathbf I\}$ or
 $\mathcal S_{\mathcal L,E}\,=\,
\left\{
\mathbf R\in SO(3): 
\mathbf R\mathbf e_3=\mathbf e_3
\right\}.$
 If $ \mathcal S_{\mathcal L ,E}\!\equiv\! \{\mathbf I\}$ then clearly 
${\mathcal G}^I\!\equiv\! \mathcal E^I$, hence in this case we recover the minimum of the Signorini problem in linearized elasticity as the limit of the nonlinear energies $\inf \mathcal G_h^I$. 
In particular, if $\Omega$ is  
contained in the upper half-space, $E =\partial\Omega\cap\{x_3=0\}$,  if  $\mathbf f= f\mathbf e_3,\ \mathbf g= g\mathbf e_3$ and  $\mathcal L$
satisfies conditions {\eqref{rc}}-\eqref{shear} and $\mathcal L(\mathbf e_3)<0,$ then since  $\mathcal L(\v)=\mathcal L(\mathbf R\v)$ for every $\mathbf R\in \mathcal S_{\mathcal L, E}$ we get  $\min \mathcal G^I\!=\!\min \mathcal E^I$ hence $\inf \mathcal G_h^I\to \min\mathcal E^I$ in this case.}\\
 
 \MMM
 \textbf{Plan of the paper.} In section \ref{sect Main Results} we rigorously provide assumptions on the obstacle, on the external forces and the global energy functionals. Then, we state the main variational convergence result. In Section \ref{proofssection} we give the proof of the main result after having
 proved suitable technical lemmas. With respect to the analysis from \cite{MPTOB} about the compressible case, some care is needed in constructing the suitable upper and lower bounds while simultaneously taking care of the obstacle condition and the incompressibility constraint.
 \KKK
 
\section{Notation and main results} 
\label{sect Main Results}

In the following, $\Omega$ will denote the reference configuration of an elastic body. $\Omega$  is always assumed to be a nonempty, bounded,  connected, Lipschitz  open  set in $\mathbb R^3$.

\MMM

\subsection{The obstacle} 
Notations $\xx=(x_1,x_2,x_3)$ and $\yy=(y_1,y_2,y_3)$ will be used to represent generic points in $\R^3$, with components referred to the canonical basis  
$(\mathbf e_i)$, $i=1,2,3$.
In the Signorini problem,  the elastic body rests on a frictionless rigid support $E\subset\partial\Omega$. The set $E$ will be assumed to be planar, i.e., contained in $\{x_3=0\}$, and of positive capacity, according to the next definition. 

\MMM  \KKK
\MMM
For every compact set $K\subset\mathbb R^N$  we define the Sobolev capacity of $K$ by setting, see \cite[Section 2.2]{AH},
\begin{equation*}
\capacity K\!=\! \inf \left\{\|w\|_{H^1(\R^N)}^2: w\!\in\! C^{\infty}_0(\mathbb R^N),\, w\ge 1\,\text{on}\, K\right\}.
\end{equation*}
If $G\subset \R^N$ is open we define 
\[
\capacity G:= \sup\{\capacity K: K\ \text{compact},\ K\subset G\},
\]
and for a generic set $F\subset\mathbb R^3$
\[
\capacity F:=\inf\{\capacity G: G\ \text{open},\ F\subset G\}.
\]
%
\KKK
A property is said to hold quasi-everywhere (q.e. for short) if it holds true outside a set of zero capacity. 
We introduce (see \cite{BBGT}) a canonical representative of a set $F$, called the \textsl{essential part} of $F$ and denoted by $F_{ess}$, which coincides with $F$  whenever $F$  is a smooth closed manifold  or the closure of an open subset of $\mathbb R^N$:
 for every set $F\subset\R^3$ we define
 \begin{equation*}
 F_{ess}\ :=\ \bigcap \,\{\, C:\ C \hbox{ is closed and }  \capacity(F\!\setminus\! C)=0\,\}.
\end{equation*}
As shown in \cite{BBGT}, we have
\[  F_{ess}\ \text{ is a closed subset of}\  \overline F,\qquad \capacity(F\!\setminus\! F_{ess})=0,\qquad \capacity F=0\ \text{ if and only if}\  F_{ess}=\emptyset.\]
\MMM Moreover, if $\capacity F>0$ and $\x\in F_{ess}$, then $\capacity(F\cap B_r(\x))>0$ for every $r>0$.\\ \KKK


Throughout the paper we will assume that \MMM the set $E$ giving the obstacle condition satisfies \KKK
\begin{equation}
\lab{H2 on E}
E\subset \partial\om\cap \{x_3=0\}\not\equiv \emptyset\qquad\mbox{and}\qquad \capacity E>0 .
\end{equation}

We recall that every function in  $H^{1}(\Omega;\R^3)$  has a precise representative defined quasi-everywhere on the whole $\overline\Omega$.
Indeed, if $\uu\!\in\! H^{1}(\Omega;\R^3)$ and $\vv\!\in\! H^{1}(\R^3;\R^3)$ is a Sobolev extension of $\uu$, it is well known (see \cite[Proposition 6.1.3]{AH}) that the limit 
\begin{equation*}
\displaystyle
 {\vv}^* (\xx)\ :=\ \lim_{r \downarrow 0}\ 
 \frac{1}{|B_r(\xx)|}\,\int_{ B_r(\xx)}\! \vv(\mathbf \xi)\,d\mathbf \xi
 \end{equation*}
exists for q.e.\,$\x\!\in\!\R^3.$ The function ${\vv}^* $ is called the precise representative of $\vv$ and is a quasicontinuous function in $\R^3$, meaning that  for every $\eps> 0$ there exists an open set $V\subset \R^3$ such that $\capacity V < \eps$ and
${\vv}^*$ is continuous in $\R^3\!\setminus\! V$.
It has been proven in \cite{MPTOB} that if $\vv_1,\ \vv_2$ are two distinct Sobolev extensions of $\uu$ then
$
\vv_1^*(\xx)\ =\ \vv_2^*(\xx)$
 for {q.e.} $\xx\in \overline\om$. 
 Therefore if $\uu\in H^{1}(\Omega;\R^3)$ we may define its precise representative for quasi-every $\xx\in \overline\om$ 
by 
 \beeq\lab{prec}{\uu}^* (\xx)=\lim_{r \downarrow 0}\ 
 \frac{1}{|B_r(\xx)|}\,\int_{ B_r(\xx)}\! \vv(\mathbf \xi)\,d\mathbf \xi,
 \qquad
 \text{q.e. }\xx\in \overline\om,
\eneq
where $\vv$ is any Sobolev extension of $\uu$. 
The function ${\uu}^* $ is 
pointwise quasi-everywhere defined by \eqref{prec} and
is \textsl{quasicontinuous} on $\overline\om$ i.e., for every $\eps> 0$ there exists a relatively  open set $V\subset \overline\om$ such that $\capacity V < \eps$ and
${\uu}^*$ is continuous in $\overline\om\setminus V$.

\begin{remark}\lab{2.4}\MMM \rm
If $w\in H^1(\Omega)$ then its negative part $w^-$ is in $H^1(\Omega)$ too.
Moreover, both $(w^-)^*$ and  $(w^*)^-$ are quasicontinuous in $\overline\om$ and $(w^-)^*=(w^*)^-=w^-$ a.e. in $\om$. Then, by \cite{K}, $(w^-)^*=(w^*)^-$ q.e. in $\overline\om$.
Therefore the condition $(w^-)^*=0$ q.e. in $E_{ess}$ is equivalent to $w^*\ge 0$ q.e. in $E_{ess}$.
In particular, 
as it was pointed out in  in \cite{MPTOB}, Theorem 2.1 of \cite{ET} entails that if $E_{ess}\subset \partial\om$ is Ahlfors $2$-regular  and $\mathcal H^2(E_{ess}) > 0$, then the condition $w\ge 0\ $ q.e. on $E_{ess}$ is equivalent to $w\ge 0\ \ \mathcal H^2$ a.e. on $E_{ess}$ so the classical framework of 
\cite{BBGT,Si,KS} 
 is equivalent to ours in this case.\KKK
\end{remark}


%
\subsection{The elastic energy density} 
Let $\mathbb R^{3\times 3}$ denote the set of $3\times 3$ real matrices, endowed with the Euclidean norm $|\mathbf F|=\sqrt{\tr( \mathbf F^T\mathbf F)}$. We let  ${\mathrm{sym} \mathbf F}:=\tfrac12(\mathbf F^T+\mathbf F)$.  $SO(3)$ will  denote the special orthogonal group.
Let ${\mathcal L}^3$ and ${\mathcal B}^3$ denote respectively the
$\sigma\mbox{-algebras}$
of Lebesgue measurable and Borel measurable subsets of $\R^3$
and let $\mathcal W : \om \times \mathbb R^{3 \times 3} \to [0, +\infty ]$ be  ${\mathcal L}^3\! \times\! {\mathcal B}^{9} $- measurable 
 satisfying the following standard assumptions, see also \cite{MaiPe2,MaiPe3}: 
%
%
\beeq \lab{framind}\W(\x, \mathbf R\mathbf F)=\W(\x, \mathbf F) \quad \ \forall \ \mathbf\! \mathbf R\!\in\! SO(3) \ \ \forall\ \mathbf F\!\in\! \mathbb R^{3 \times 3},\quad\qquad \ \mbox{for a.e. $\x\!\in\!\Omega$},
\eneq
\beeq \lab{Z1}
\min_{\mathbf F} \W(\x,\mathbf F)= \W(\x,\mathbf I)=0 \quad \mbox{for a.e. $\x\in\Omega$}
\eneq
and as far as it concerns the regularity of $\mathcal W$, we assume that there exist an open neighborhood $\mathcal U$ of $SO(3)$ in $\R^{3\times3}$,  an increasing function $\omega:\mathbb R_+\to\mathbb R$ satisfying $\lim_{t\to0^+}\omega(t)=0$ and a constant $K>0$
such that for a.e. $\x\in\om$
\beeq\begin{array}{ll}\lab{reg}&   
\vspace{0,1cm}
\mathcal W(\x,\cdot)\in C^{2}(\mathcal U),\;\;\;
 \vspace{0,1cm}
  |D^2 \mathcal W(\x,\mathbf I)|\le K \;\;\hbox{and}\\
& 
 |D^2\W(\x,\mathbf F)-D^2\W(\x,\mathbf G)|\le\omega(|\mathbf F-\mathbf G|),\quad\forall\; \mathbf F,\mathbf G\in\mathcal U.
\end{array}
\eneq
Moreover we assume that there exists $C>0$  such that for a.e. $\x\in\Omega$
\beeq 
\lab{coerc}
\begin{array}{ll}
\W(\x,\mathbf F)\ge  C (d(\mathbf F, SO(3)))^2\qquad
 \forall\, \mathbf F\in \mathbb R^{3 \times 3},
\end{array}
\eneq
where $d(\,\cdot\,,SO(3))$ denotes the Euclidean distance function from the set of rotations.
In order to consider incompressible elasticity models, starting from a function $\mathcal W$ as above we  introduce the incompressible strain energy density by letting, for a.e. $\x\in\Omega$,
\begin{equation}\lab{winc}
{\mathcal W}^I (\x, \mathbf F):=\left\{\begin{array}{ll} \mathcal W(\x,\mathbf F) \qquad &\hbox{if}\ \ \det\mathbf F=1\\
 +\infty\ &\hbox{otherwise.}
\end{array}\right.
\end{equation} \MMM

We recall some basic properties that follow from the above assumptions, see \cite{MaiPe2, MaiPe3}. For a.e. $\x\in\om$  there holds 
%
%
\[\displaystyle 
\lim_{h\to 0} h^{-2}\mathcal W(\x,\mathbf I+h\mathbf F)=\frac12\, \mathbf F^T D^2\mathcal W(\x,\mathbf I)\,\mathbf F=\frac12
\,{\rm sym} \mathbf  F\, D^2\mathcal W (\x, \mathbf I) \ {\rm sym}\mathbf  F\qquad \ \forall\ \mathbf F\!\in\! \mathbb R^{3 \times 3},
\]
hinting to the linear elastic energy density as limit of the nonlinear energies, with $\mathbb C(\x)= D^2\mathcal W (\x, \mathbf I)$,  and 
\begin{equation*}\label{ellipticity}\begin{aligned}
\frac12\, \mathbf F^T D^2\mathcal W(\x,\mathbf I)\,\mathbf F&
\ge 
C|\mathrm{sym}\mathbf F|^2\qquad \forall\ \mathbf F\!\in\! \mathbb R^{3 \times 3},\end{aligned}
\end{equation*}
where $C$ is the constant in \eqref{coerc}.
%
 Taking  \eqref{reg} into account we have, for a.e. $\x\in\om$,
\begin{equation}\lab{regW}
\left|\mathcal W(\x, \mathbf I+h\mathbf F)- \frac{h^2}{2} \,{\rm sym} \mathbf  F\, D^2\mathcal W (\x, \mathbf I) \ {\rm sym}\mathbf  F\right|\le h^2\omega(h|\mathbf F|)|\mathbf F|^2
\end{equation}
for every $\mathbf F\in\R^{3\times3}$ and every $h>0$ such that $\mathbf I+h\mathbf F\in\mathcal U$. 

We mention a class of energy densities $\W$ (the so called Yeoh materials \cite{Y1,Y2})  fulfilling the assumptions  above \eqref{framind}--\eqref{coerc} and for which the main result of the present paper (see Theorem \ref{mainth1comp} below) applies 
(for simplicity, we consider the homogeneous case): 
\begin{equation*}
\W(\mathbf{F}):=\sum_{k=1}^3 c_k(|\mathbf{F}|^2-3)^k
\end{equation*}
with $c_k> 0$.
It is easy to check that with this choice the  energy density  satisfies all assumptions from \eqref{framind} to \eqref{reg} while inequality  \eqref{coerc} has been proven in \cite{MOP}.  \KKK
It is worth noticing that when material constants are suitably chosen then also the classical Ogden-type energies may fulfill the assumptions from \eqref{framind} to \eqref{coerc} and we refer to  \cite{MOP} for all details.
\subsection{External forces}
We introduce  a  body force field $\mathbf f\in L^{6/5}(\Omega,\R^3)$ and a surface force field $\mathbf g\in L^{4/3}(\partial\Omega,\R^3)$. From now on,  $\mathbf f$ and $\mathbf g$ will  always be understood to satisfy these summability assumptions. The load functional is the following linear functional \begin{equation*}
\mathcal L(\v):=\int_\om\mathbf f\cdot\v\,d\x+\int_{\partial\om}\mathbf g\cdot\v\,d\mathcal H^2(\x),\qquad \v\in H^1(\om,\mathbb R^3).
\end{equation*}
We note that since $\Omega$ is a bounded Lipschitz domain, the Sobolev embedding $H^{1}(\Omega,\mathbb R^3)\hookrightarrow L^{6}(\Omega,\mathbb R^3)$ and the Sobolev trace embedding $H^{1}(\Omega,\mathbb R^3)\hookrightarrow L^{4}(\partial\Omega,\mathbb R^3)$ imply that $\mathcal L$ is a bounded functional over $H^{1}(\Omega,\mathbb R^3)$ and throughout the paper we denote its norm with $\|\mathcal L\|_*$.

For every $\mathbf R\!\in\! SO(3)$ we set
\[ \displaystyle\mathcal C_{\mathbf R}:=\{\mathbf c: c_3\ge -\min_{\x\in E_{ess}}(\mathbf R\x)_3\}\]
and we assume the following \textsl{geometrical compatibility between load and obstacle}
\beeq\lab{L0} \mathcal L((\mathbf R-\mathbf I)\x+\mathbf c)\le 0\qquad \forall \mathbf R\in SO(3),\ \forall \mathbf c\,\in \mathcal C_{\mathbf R}\eneq
together with 
\beeq\lab{shear2} \mathcal L\left ((\mathbf R\x-\x\right)_\alpha\mathbf e_\alpha)\le 0\quad  \forall\,\mathbf R\in SO(3),\eneq
the summation convention over the repeated index $\alpha=1,2$ being understood all along this paper.
It can be shown that condition \eqref{L0} is equivalent to (see \cite[Remark 3.4]{MPTOB})
\beeq
\lab{L1} 
\mathcal L(\mathbf e_3)\le 0=\mathcal L(\mathbf e_1)=\mathcal L(\mathbf e_2) ,
\qquad\quad
\Phi(\mathbf R, E, \mathcal L)\le 0
\quad \forall\,\mathbf R\!\in\! SO(3) ,
\eneq
where we have set
$$\Phi(\mathbf R, E, \mathcal L):=\mathcal L((\mathbf R-\mathbf I)\x)\,-\,\mathcal L(\mathbf e_3)\min_{\x\in E_{ess}}(\mathbf R\x)_3$$
and from now on we will use \eqref{L1} in place of \eqref{L0}. It has been shown in \cite{MPTOB} that conditions \eqref{shear2} and \eqref{L1} are compatible and that they do not imply each other, see also Remark \ref{shear3}  below.
We next state three auxiliary lemmas, in order to better clarify the role 
of conditions \eqref{shear2} and \eqref{L1}. Proofs may be found in \cite{MPTOB}. In the following, $\wedge$ denotes the cross product.
\begin{lemma}\lab{load0}
Assume that \eqref{shear2} holds. Then, 
\beeq\lab{equivshear} \mathcal L ((\mathbf a\wedge \x)_{\alpha}\,\mathbf e_\alpha)=0\ \  \text{and}\ \ \mathcal L ((\mathbf a\wedge  (\mathbf a\wedge  \x))_{\alpha}\,\mathbf e_\alpha)\le 0\ \ \ \forall\, \mathbf a\in \mathbb R^3.
\eneq
\end{lemma}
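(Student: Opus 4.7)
The plan is to exploit the fact that condition \eqref{shear2} must hold along any smooth one-parameter curve in $SO(3)$, so I would parametrize a path through the identity whose first and second derivatives reproduce $\mathbf a\wedge\x$ and $\mathbf a\wedge(\mathbf a\wedge\x)$, and then extract the two claims by a first- and second-order expansion.

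Fix $\mathbf a\in\mathbb R^3$ and let $\mathbf W_{\mathbf a}\in\R^{3\times3}$ be the skew-symmetric matrix characterized by $\mathbf W_{\mathbf a}\x=\mathbf a\wedge\x$ for every $\x\in\R^3$. Set $\mathbf R_t:=\exp(t\mathbf W_{\mathbf a})\in SO(3)$ for $t\in\R$. By expanding the exponential to second order,
\begin{equation*}
\mathbf R_t\x-\x\,=\,t\,(\mathbf a\wedge\x)\,+\,\frac{t^2}{2}\,(\mathbf a\wedge(\mathbf a\wedge\x))\,+\,t^3\,\mathbf r(t,\x),
\end{equation*}
where the remainder $\mathbf r(t,\x)$ is uniformly bounded for $|t|\le 1$ and $\x\in\Omega$ (since $\Omega$ is bounded). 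Projecting on the first two components and applying $\mathcal L$, which is linear and continuous on $H^1(\Omega;\R^3)$, I would write
\begin{equation*}
\mathcal L((\mathbf R_t\x-\x)_\alpha\mathbf e_\alpha)\,=\,t\,\mathcal L((\mathbf a\wedge\x)_\alpha\mathbf e_\alpha)\,+\,\frac{t^2}{2}\,\mathcal L((\mathbf a\wedge(\mathbf a\wedge\x))_\alpha\mathbf e_\alpha)\,+\,O(t^3).
\end{equation*}
Hypothesis \eqref{shear2} applied to $\mathbf R_t$ gives that the left-hand side is $\le 0$ for every $t\in\R$.

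Dividing this inequality by $t>0$ and letting $t\downarrow 0$ yields $\mathcal L((\mathbf a\wedge\x)_\alpha\mathbf e_\alpha)\le 0$; dividing by $t<0$ and letting $t\uparrow 0$ reverses the inequality. Hence $\mathcal L((\mathbf a\wedge\x)_\alpha\mathbf e_\alpha)=0$, the first assertion in \eqref{equivshear}. Substituting this identity back, the inequality $\le 0$ reduces to
\begin{equation*}
\frac{t^2}{2}\,\mathcal L((\mathbf a\wedge(\mathbf a\wedge\x))_\alpha\mathbf e_\alpha)\,+\,O(t^3)\,\le\,0,
\end{equation*}
and dividing by $t^2/2>0$ and letting $t\to 0$ produces the second assertion in \eqref{equivshear}.

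There is no real obstacle: the only thing to be careful about is that the Taylor remainder in $\mathbf R_t\x-\x$ passes through $\mathcal L$ with the correct order, which is guaranteed by the $L^\infty(\Omega;\R^3)$ (hence $H^1$) bound on $\mathbf r(t,\cdot)$ together with continuity of $\mathcal L$ on $H^1(\Omega;\R^3)$.
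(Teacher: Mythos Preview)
Your argument is correct and is essentially the standard one: the function $t\mapsto \mathcal L((\mathbf R_t\x-\x)_\alpha\mathbf e_\alpha)$ is nonpositive by \eqref{shear2} and vanishes at $t=0$, so its first derivative vanishes and its second derivative is nonpositive there, yielding \eqref{equivshear}. The paper does not give its own proof of this lemma but refers to \cite{MPTOB}; your approach is precisely the natural expansion one would expect in that reference.

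One cosmetic remark on the last paragraph: an $L^\infty$ bound does not by itself imply an $H^1$ bound. What actually makes the remainder estimate go through is that $\mathbf r(t,\cdot)$ is a \emph{linear} map of $\x$ (namely $t^{-3}(\mathbf R_t-\mathbf I-t\mathbf W_{\mathbf a}-\tfrac{t^2}{2}\mathbf W_{\mathbf a}^2)\x$), with matrix coefficient uniformly bounded for $|t|\le 1$; hence its $H^1(\Omega;\R^3)$ norm is controlled by that matrix norm times $\|\x\|_{H^1(\Omega;\R^3)}$, and continuity of $\mathcal L$ gives the $O(t^3)$ term. With that tweak the justification is airtight.
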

\begin{remark}\lab{0L}\rm It is worth noticing that, by  inserting $\mathbf a=\mathbf e_1$ or $\mathbf a=\mathbf e_2$, the condition \eqref{equivshear} entails
$\mathcal L(x_3\mathbf e_2)=0$ and $\mathcal L(x_3\mathbf e_1)=0$ respectively.
\end{remark}
\begin{lemma}\label{load1}
 Assume 
 \eqref{H2 on E} and 
\eqref{L0}. Then
\begin{enumerate}
 \item[(1)] 
   $\, \mathcal L(\mathbf e_1) = \mathcal L(\mathbf e_2) = 0$ and $
     \mathcal L(\mathbf e_3) \le 0   ;  $
 \vskip0.05cm
  \item[(2)] 
   $\, \mathcal L(\mathbf e_3\wedge \xx) = 0  ; $
  \vskip0.05cm
  \item[(3)] 
   $\, \mathcal L\big(\mathbf e_3\wedge (\mathbf e_3\wedge\xx)\big) \le 0  ; $
   \vskip0.04cm
 \item[(4)]   there exists $\xx_{\mathcal L}$ in the relative interior of the convex envelope of  $E_{ess} $ such that\\ $\ \mathcal L\big(\,\mathbf a \wedge (\xx-\xx_{\mathcal L})\big)=0$ $\ \forall\mathbf a\in\R^3$.
\end{enumerate}
\end{lemma}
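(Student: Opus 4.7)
The plan is to test the compatibility condition \eqref{L0} on carefully chosen one-parameter families of rigid deformations, Taylor-expanding $\mathbf R\mapsto\mathcal L((\mathbf R-\mathbf I)\xx)+\mathcal L(\mathbf c)$ near $\mathbf R=\mathbf I$ with $\mathbf c$ on the boundary of $\mathcal C_{\mathbf R}$. The planarity $E\subset\{x_3=0\}$ is crucial, since it makes rotations around $\mathbf e_3$ leave $E_{ess}$ in the obstacle plane.

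For (1) I would take $\mathbf R=\mathbf I$ in \eqref{L0}: then $\min_{\xx\in E_{ess}}(\mathbf I\xx)_3=0$, so the condition reduces to $\mathcal L(\mathbf c)\le 0$ for every $\mathbf c$ with $c_3\ge 0$. Testing $\pm\mathbf e_1,\pm\mathbf e_2$ (all admissible since $c_3=0$) forces $\mathcal L(\mathbf e_1)=\mathcal L(\mathbf e_2)=0$, and $\mathbf c=t\mathbf e_3$ with $t>0$ yields $\mathcal L(\mathbf e_3)\le 0$. For (2)--(3), I would let $\mathbf R_\theta=\exp(\theta W)$ with $W\xx=\mathbf e_3\wedge\xx$: rotations around $\mathbf e_3$ preserve $x_3$, so $(\mathbf R_\theta\xx)_3=0$ on $E_{ess}$, hence $\mathbf 0\in\mathcal C_{\mathbf R_\theta}$ and \eqref{L0} becomes $\mathcal L((\mathbf R_\theta-\mathbf I)\xx)\le 0$. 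Writing
\[
(\mathbf R_\theta-\mathbf I)\xx=\theta\,\mathbf e_3\wedge\xx+\tfrac{\theta^2}{2}\,\mathbf e_3\wedge(\mathbf e_3\wedge\xx)+O(\theta^3),
\]
the sign change $\theta\to 0^\pm$ at first order forces (2); once (2) kills the first-order contribution, dividing by $\theta^2/2$ and letting $\theta\to 0$ yields (3).

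For (4), I would test \eqref{L0} on $\mathbf R_\theta=\exp(\theta W_{\mathbf a})$ with $W_{\mathbf a}\xx=\mathbf a\wedge\xx$, $\mathbf a\in\R^3$ arbitrary, and the extremal choice $\mathbf c=-\min_{\xx\in E_{ess}}(\mathbf R_\theta\xx)_3\,\mathbf e_3\in\partial\mathcal C_{\mathbf R_\theta}$. Boundedness of $E_{ess}$ makes the expansion $(\mathbf R_\theta\xx)_3=\theta(\mathbf a\wedge\xx)_3+O(\theta^2)$ uniform, so $\min_{\xx\in E_{ess}}(\mathbf R_\theta\xx)_3=\theta\,m(\mathbf a)+O(\theta^2)$ for $\theta>0$ small, with $m(\mathbf a):=\min_{\xx\in E_{ess}}(\mathbf a\wedge\xx)_3$. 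Dividing \eqref{L0} by $\theta$ and letting $\theta\to 0^+$ gives
\[
\mathcal L(\mathbf a\wedge\xx)\le m(\mathbf a)\,\mathcal L(\mathbf e_3),
\]
while the analogous $\theta\to 0^-$ limit (equivalently $\mathbf a\mapsto-\mathbf a$) yields the companion bound $\mathcal L(\mathbf a\wedge\xx)\ge M(\mathbf a)\,\mathcal L(\mathbf e_3)$, where $M(\mathbf a):=\max_{\xx\in E_{ess}}(\mathbf a\wedge\xx)_3$. If $\mathcal L(\mathbf e_3)=0$ the bounds give $\mathcal L(\mathbf a\wedge\xx)\equiv 0$, and by (1) the identity $\mathcal L(\mathbf a\wedge\xx_{\mathcal L})=(\mathbf a\wedge\xx_{\mathcal L})_3\mathcal L(\mathbf e_3)=0$ holds for every constant $\xx_{\mathcal L}$, so any point in $\ri(\co(E_{ess}))$ works. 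If $\mathcal L(\mathbf e_3)<0$, I would define $\xx_{\mathcal L}=(\alpha,\beta,0)$ by
\[
\beta:=\mathcal L(\mathbf e_1\wedge\xx)/\mathcal L(\mathbf e_3),\qquad \alpha:=-\mathcal L(\mathbf e_2\wedge\xx)/\mathcal L(\mathbf e_3).
\]
By linearity in $\mathbf a$ combined with (2), this $\xx_{\mathcal L}$ realizes $\mathcal L(\mathbf a\wedge(\xx-\xx_{\mathcal L}))=0$ for every $\mathbf a\in\R^3$; dividing the two-sided bounds by the negative number $\mathcal L(\mathbf e_3)$ encodes $\mathbf b\cdot(\alpha,\beta)\in[\min\{\mathbf b\cdot(x_1,x_2):(x_1,x_2,0)\in E_{ess}\},\,\max]$ for every $\mathbf b\in\R^2$, which by the Hahn--Banach dual characterization of the convex hull places $\xx_{\mathcal L}\in\co(E_{ess})$.

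The delicate step is promoting $\xx_{\mathcal L}\in\co(E_{ess})$ to $\xx_{\mathcal L}\in\ri(\co(E_{ess}))$. I expect to argue by contradiction: a relative-boundary point admits a supporting line, which selects a direction $\mathbf a_*$ saturating the first-order inequality and makes $\Phi(\mathbf R_\theta,E,\mathcal L)=O(\theta^2)$ along that family. Extracting the contradiction then requires a second-order Taylor analysis of $\Phi(\mathbf R_\theta,E,\mathcal L)\le 0$, used in conjunction with the hypothesis $\capacity E>0$, to rule out that the extremal set for $\mathbf a_*\wedge\cdot$ exhausts $E_{ess}$ modulo null capacity. This is the step where the capacity assumption plays a decisive role, beyond the trivial requirement $E_{ess}\neq\emptyset$.
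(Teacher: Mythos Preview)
The paper does not prove this lemma: immediately before stating Lemmas \ref{load0}--\ref{lemma alternative} it says ``Proofs may be found in \cite{MPTOB}.'' So there is no in-paper argument to compare against; what follows is an assessment of your proposal on its own terms.

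Your treatment of (1), (2), (3) is correct and is exactly the standard first/second-order variation argument one expects. Your derivation of (4) up to $\xx_{\mathcal L}\in\co(E_{ess})$ is also correct: the pair of first-order inequalities
\[
M(\mathbf a)\,\mathcal L(\mathbf e_3)\ \le\ \mathcal L(\mathbf a\wedge\xx)\ \le\ m(\mathbf a)\,\mathcal L(\mathbf e_3)
\]
together with the explicit formula for $(\alpha,\beta)$ and the supporting-hyperplane characterization of the convex hull is clean and complete.

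The gap is in your sketch for promoting $\co$ to $\ri$. Concretely: take $\mathbf a_*$ with $a_{*,3}=0$ saturating $m(\mathbf a_*)$. Since $(\mathbf a_*\wedge(\mathbf a_*\wedge\xx))_3=(\mathbf a_*\cdot\xx)\,a_{*,3}-|\mathbf a_*|^2x_3=0$ on $E_{ess}$, the second-order term in $\min_{E_{ess}}(\mathbf R_\theta\xx)_3$ vanishes, and your second-order inequality from $\Phi(\mathbf R_\theta,E,\mathcal L)\le 0$ reduces to
\[
\mathcal L\big(\mathbf a_*\wedge(\mathbf a_*\wedge\xx)\big)\ \le\ 0.
\]
This is a new scalar inequality, not a contradiction; nothing in the hypotheses you have used so far forces it to fail. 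A single one-parameter family $\exp(\theta W_{\mathbf a_*})$ therefore does not close the argument, and the contradiction you anticipate does not materialize at second order along that family. Your speculation that $\capacity E>0$ is the missing ingredient (``to rule out that the extremal set exhausts $E_{ess}$'') is not on target either: the extremal set for a linear functional on the boundary of a $2$-dimensional convex body is typically a single point or a segment, and this has nothing to do with capacity. The only place capacity visibly enters is in guaranteeing $E_{ess}\neq\emptyset$ and, since a line segment in $\R^3$ has zero $H^1$-capacity, in forcing $\co(E_{ess})$ to be two-dimensional---but that still does not by itself pin $\xx_{\mathcal L}$ to the relative interior.

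In short: (1)--(3) and the convex-hull half of (4) are done; your outlined route to the relative-interior refinement does not work as written. Whatever argument \cite{MPTOB} uses for the $\ri$ claim must go beyond the second-order Taylor expansion of $\Phi$ along a single saturating axis---most likely a more global exploitation of $\Phi\le 0$ on all of $SO(3)$, or a carefully chosen two-parameter family of rotations---and you should consult that reference for the actual mechanism.
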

\begin{remark}\rm
We emphasize that conditions (1) and (4) in  Lemma \ref{load1} together with \eqref{H2 on E} and $\mathcal L(\mathbf e_3)< 0$ coincide with conditions (4.9)--(4.11) of  \cite[Theorem 4.5]{BBGT}, which provides the solution to Signorini problem in linear elasticity.
We also emphasize that
the whole set of conditions (1)--(4) appearing in the claim of Lemma \ref{load1} together with 
condition~\eqref{H2 on E} on the set $E$ is not equivalent to admissibility of the loads as expressed by \eqref{L1}. This phenomenon is made explicit in \cite[Example 3.6]{MPTOB}.\end{remark}
Finally, by setting 
\beeq
\label{kernel}
 \mathcal S_{\mathcal L,E}=\left\{\, \mathbf R\in SO(3): \Phi(\mathbf R, E, \mathcal L)=0 \,\right\},
\eneq
we have
\begin{lemma}\label{lemma alternative}
Assume that  \eqref{H2 on E}, \eqref{L1} hold and that $\mathcal L(\mathbf e_3) < 0$.
Then
\begin{equation*}
 \hbox{either}\quad
 \mathcal S_{\mathcal L,E}=\{\,\mathbf I\,\}
 \qquad
 \hbox{or}
 \quad
  \mathcal S_{\mathcal L,E}=\{\,\mathbf R\in SO(3): \mathbf R\mathbf e_3=\mathbf e_3\,\}.
\end{equation*}
\end{lemma}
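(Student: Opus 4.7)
I would parametrize every $\mathbf R\in SO(3)$ by Rodrigues' axis--angle formula, with unit axis $\mathbf a$ and angle $\theta\in[0,\pi]$, and analyze $\Phi(\mathbf R,E,\mathcal L)$ through the conclusions of Lemmas \ref{load0} and \ref{load1}, treating axial rotations ($\mathbf a\parallel\mathbf e_3$) separately from the tilted ones.

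For a rotation $\mathbf R_\theta$ of angle $\theta$ about $\mathbf e_3$, one has $(\mathbf R_\theta\mathbf x)_3=x_3$, which vanishes on $E_{ess}\subset\{x_3=0\}$, so $m_{\mathbf R_\theta}:=\min_{\mathbf x\in E_{ess}}(\mathbf R_\theta\mathbf x)_3=0$ and $\Phi(\mathbf R_\theta,E,\mathcal L)=\mathcal L((\mathbf R_\theta-\mathbf I)\mathbf x)$. Expanding $(\mathbf R_\theta-\mathbf I)\mathbf x=(\cos\theta-1)(x_1\mathbf e_1+x_2\mathbf e_2)+\sin\theta\,(\mathbf e_3\wedge\mathbf x)$ and invoking $\mathcal L(\mathbf e_3\wedge\mathbf x)=0$ from Lemma \ref{load1}(2) yields
\[
\Phi(\mathbf R_\theta,E,\mathcal L)=(\cos\theta-1)\,K,\qquad K:=\mathcal L(x_1\mathbf e_1)+\mathcal L(x_2\mathbf e_2).
\]
Since $\mathbf e_3\wedge(\mathbf e_3\wedge\mathbf x)=-(x_1\mathbf e_1+x_2\mathbf e_2)$, Lemma \ref{load1}(3) gives $K\ge 0$. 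Hence a dichotomy already appears among axial rotations: if $K>0$, then $\Phi(\mathbf R_\theta,E,\mathcal L)<0$ for every $\theta\neq 0$ and only $\mathbf I$ belongs to $\mathcal S_{\mathcal L,E}$ among them; if $K=0$, then $\Phi(\mathbf R_\theta,E,\mathcal L)\equiv 0$ and every rotation about $\mathbf e_3$ belongs to $\mathcal S_{\mathcal L,E}$.

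For the tilted case with axis $\mathbf a$ and $\theta\in(0,\pi]$, I would substitute the Rodrigues expansion $(\mathbf R-\mathbf I)\mathbf x=\sin\theta\,(\mathbf a\wedge\mathbf x)+(1-\cos\theta)\,\mathbf a\wedge(\mathbf a\wedge\mathbf x)$ into $\Phi$, reduce $\mathcal L(\mathbf a\wedge\mathbf x)$ to $\mathcal L(\mathbf e_3)\,(\mathbf a\wedge\mathbf x_{\mathcal L})_3$ via Lemma \ref{load1}(4), and split $\mathcal L((\mathbf R-\mathbf I)\mathbf x)$ into its horizontal and vertical contributions. The horizontal contribution is nonpositive by \eqref{shear2}, while the vertical one, after combining with $-\mathcal L(\mathbf e_3)\,m_{\mathbf R}$, is also nonpositive, thanks to $\mathcal L(\mathbf e_3)<0$ and $(\mathbf R\mathbf x_{\mathcal L})_3\ge m_{\mathbf R}$, which follows from the affinity of $\mathbf x\mapsto(\mathbf R\mathbf x)_3$ and the fact that $\mathbf x_{\mathcal L}$ lies in the relative interior of $\mathrm{conv}(E_{ess})$. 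The identity $\Phi=0$ then forces both contributions to vanish, so in particular the affine map $\mathbf x\mapsto(\mathbf R\mathbf x)_3$ must be constant on $E_{ess}$. Analyzing its Rodrigues coefficients $-\sin\theta\,a_2+(1-\cos\theta)a_1a_3$ and $\sin\theta\,a_1+(1-\cos\theta)a_2a_3$ shows that this constancy is compatible with $\mathbf R\ne\mathbf I$ only when $\mathbf a\parallel\mathbf e_3$, which, combined with the axial case, yields the dichotomy.

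The main obstacle is the last step. On one hand, the residual term $(R_{33}-1)\mathcal L(x_3\mathbf e_3)$ that surfaces in the vertical decomposition is not a priori signed and must be controlled. On the other hand, the marginal case $\theta=\pi$ with horizontal axis (for which $m_{\mathbf R}=0=(\mathbf R\mathbf x_{\mathcal L})_3$ so the basic inequalities degenerate) has to be excluded by carefully combining $\mathcal L(\mathbf e_3)<0$, the positive capacity of $E$ (which through Lemma \ref{load1}(4) supplies a genuine interior barycenter $\mathbf x_{\mathcal L}$), and all three conditions \eqref{H2 on E}, \eqref{shear2} and \eqref{L1}.
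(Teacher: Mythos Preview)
The paper does not give its own proof of this lemma: it is one of the three auxiliary lemmas that the authors explicitly import from \cite{MPTOB} (see the sentence preceding Lemma~\ref{load0}). So there is no in-paper argument to compare against, and your proposal has to be judged on its own.

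Your treatment of the axial rotations is clean and correct: it uses only Lemma~\ref{load1}(2)--(3), which are available under the stated hypotheses, and it already isolates the dichotomy parameter $K=\mathcal L(x_1\mathbf e_1)+\mathcal L(x_2\mathbf e_2)\ge 0$.

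The genuine gap is in the tilted case. The lemma is stated under \eqref{H2 on E}, \eqref{L1} and $\mathcal L(\mathbf e_3)<0$ only; condition \eqref{shear2} is \emph{not} assumed. Yet your argument leans on \eqref{shear2} (and hence on Lemma~\ref{load0}) in an essential way: you invoke it to make the ``horizontal contribution'' $\mathcal L\big((\mathbf R\x-\x)_\alpha\mathbf e_\alpha\big)$ nonpositive, and the same assumption is implicitly behind the simplifications $\mathcal L(x_3\mathbf e_1)=\mathcal L(x_3\mathbf e_2)=0$ (Remark~\ref{0L}) that you would need to tame the ``residual'' terms you flag at the end. Without \eqref{shear2} this decomposition does not close, so as written the plan proves a weaker statement than the one claimed.

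A route that stays within the hypotheses is to rely solely on Lemma~\ref{load1}(4), which under \eqref{L1} furnishes a barycenter $\x_{\mathcal L}$ in the relative interior of $\mathrm{conv}(E_{ess})\subset\{x_3=0\}$ with $\mathcal L(\mathbf a\wedge(\x-\x_{\mathcal L}))=0$ for all $\mathbf a$, and to combine it with the quantitative fact recorded as \cite[Lemma~3.10]{MPTOB} (used in the present paper inside the proof of Lemma~\ref{lemma compactness}), which gives $\Phi(\mathbf R,E,\mathcal L)<0$ whenever $\mathbf R\mathbf e_3\neq\mathbf e_3$ under exactly the hypotheses of Lemma~\ref{lemma alternative}. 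That estimate is what forces every element of $\mathcal S_{\mathcal L,E}$ to fix $\mathbf e_3$, after which your axial analysis finishes the dichotomy.
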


\subsection{Energy functionals} 
If $E$ fulfils \eqref{H2 on E}
the incompressible Signorini problem in linear elasticity
can be described as the minimization of the functional 
\ $\mathcal E^I: H^1(\Omega,\mathbb R^3)\to\mathbb R\cup\{+\infty\}$\, defined by
\[
\displaystyle {\mathcal E}^I(\u):=\left\{\begin{array}{ll}\displaystyle \int_\om \mathcal Q^I(\x, \mathbb E(\u))\,d\x-\mathcal L(\u)\quad &\hbox{if} \ \u\in \mathcal A
\vspace{0.1cm}
\\
 \ \!\!+\infty\quad &\hbox{otherwise in} \ H^1(\om,\mathbb R^3),
\end{array}\right.
\]
where $\mathbb E(\u):=\! \sym \nabla\u$,  $\;\;\mathcal Q(\x, \mathbf F):=\tfrac12\,\mathbf F^T\mathbb C(\x)\,\mathbf F$,  $\;\;\mathbb C(\x):=D^2\mathcal W(\x,\mathbf I)$,
\[\lab{QI}{\mathcal Q}^I (\x, \mathbf F):=\left\{\begin{array}{ll} &\mathcal Q(\x, \mathbf F)\qquad \hbox{if}\;\;\tr\,\mathbf F=0\\
&\\
& +\infty\ \hbox{otherwise}\\
\end{array}\right.
\]
  and where $\mathcal A$ is defined by
\begin{equation*}
 \mathcal A\ :=\ 
 \left\{
   \,\uu\in H^1(\Omega;\R^3):\ \,  u_3^*(\xx)\ge 0 \ \,\hbox{q.e.}\ \xx\!\in\! E \,\right\} .
\end{equation*}
The meaning of such constraint is that 
the deformed configuration of $E$, namely $\{\mathbf y(\xx)\!:=\xx+\u(\xx),\, \xx\!\in\! E\}$, is constrained to remain in $\{y_3\ge 0\}$.

For every $\mathbf y\in H^1(\om,\mathbb R^3)$ we introduce the set 
\begin{equation}\lab{roty}
\mathcal M(\mathbf y)\,:= \,
\argmin\left\{ \,\int_\om |\nabla\mathbf y-\mathbf R|^2\,d\x: 
\ \mathbf R\in SO(3)\right\} .
\end{equation}
Thus, 
due to the rigidity inequality of \cite{FJM0}, there exists a constant $C=C(\om) >0$ such that for every $\yy\in H^1(\om,\mathbb R^{3})$ and every $\mathbf R\in \mathcal M(\mathbf y)$
\begin{equation}\label{muller}
 \int_{\om}\Big(d\big(\nabla \yy, SO(3)\big)\Big)^2\,d\x
 \,\ge\,C\int_{\om}|\nabla\yy-\mathbf R|^2\,d\x,
\end{equation}
where $d\big(\mathbf F,SO(3)\big):=
\min\{|\mathbf F-\mathbf R|:\mathbf R\in SO(3)\}$. \MMM The rigidity inequality is a crucial tool for obtaining linear elasticity as limit of finite elasticity via $\Gamma$-convergence, as seen in \cite{DMNP} and subsequent works \cite{JS, KM, MPTJOTA, traction, MPTOB, MOP, MaiPe1, MaiPe2, MaiPe3, MaorMora, MoraRiva}. \KKK

\MMM Let us introduce nonlinear elastic energies. If $(h_j)_{j\in\mathbb N}\subset(0,1)$ is a vanishing sequence, 
 the rescaled finite elasticity functionals  $\mathcal G_j: H^1(\Omega,\R^3)\to \R\cup\{+\infty\}$ are defined  by  
\[ 
\mathcal G_j(\mathbf y)=\left\{\begin{array}{lr} & \displaystyle h_j^{-2}\int_\om\mathcal W(\x,\nabla\mathbf y)\,d\x
-h_j^{-1}\mathcal L(\mathbf y-\mathbf x)\ \qquad  \hbox{if}\ \mathbf y\in \mathcal A\\
&\\
& +\infty \ \qquad    \hbox{otherwise.}\\
\end{array}\right.
\]\KKK
The related rescaled incompressible finite elasticity functionals  $\mathcal G_j^I: H^1(\Omega,\R^3)\to \R\cup\{+\infty\}$ are defined by  
\[
\mathcal G_j^I(\mathbf y)=\left\{\begin{array}{lr} & \displaystyle h_j^{-2}\int_\om\mathcal W^I(\x,\nabla\mathbf y)\,d\x
-h_j^{-1}\mathcal L(\mathbf y-\mathbf x)\ \qquad  \hbox{if}\ \mathbf y\in \mathcal A,\\
&\\
& +\infty \ \qquad    \hbox{otherwise.} \\
\end{array}\right.
\]


 It will be shown (see Lemma \ref{lemma compactness} below) that $\inf_{H^1(\om;\R^3)} \mathcal G_j^I> -\infty$ for every $j\in \mathbb N$. Moreover,  $(\mathbf y_j)_{j\in\mathbb N}\!\subset\! H^1(\Omega,\mathbb R^3)$ is said a \textsl{minimizing sequence of the sequence of functionals} $\mathcal G_j^I$ if 
\beeq\lab{qmin}
 \lim_{j\to+\infty}\left(\,\mathcal G_j^I(\mathbf y_j)\,-\inf_{H^{1}(\Omega,\mathbb R^3)}\mathcal G_j^I\right)=0. 
 \eneq
The main focus of the paper is to investigate if  $\mathcal G_j^I(\yy_j)$ converges to a minimum of \eqref{qmin} whenever  $(\yy_j)_{j\in\mathbb N}$ is a minimizing sequence of $\mathcal G_j^I$.
To this end we introduce 
the functionals \MMM $\mathcal I^I,\ \widetilde{\mathcal G}^I,\ \mathcal G^I  :H^1(\Omega,\mathbb R^3)\to\mathbb R\cup\{+\infty\}$  \KKK defined by
\[
\mathcal I^I(\uu):= \displaystyle \min_{\mathbf b\in \mathbb R^2}\int_\om \mathcal Q^I(\x, \mathbb E(\u)+\tfrac{1}{2}b_{\alpha}(\mathbf e_{\alpha}\otimes\mathbf e_3+\mathbf e_{3}\otimes\mathbf e_\alpha))\,d\x,
\]
\[
\displaystyle {\widetilde{\mathcal G}}^I(\u):=
\left\{\begin{array}{ll}
\displaystyle \mathcal I^I(\uu)-\max_{\mathbf R\in \mathcal S_{\mathcal L, E}} \mathcal L(\mathbf R\u)\quad &\hbox{if} \ \u\in \mathcal A
\vspace{0.1cm}
\\
 \ \!\!+\infty\quad &\hbox{otherwise in} \ H^1(\om,\mathbb R^3),
\end{array}\right.
\]
and
\[
\displaystyle {\mathcal G}^I(\u):=
\left\{\begin{array}{ll}
\displaystyle \int_\om \mathcal Q^I(\x, \mathbb E(\u))\,d\x-\max_{\mathbf R\in \mathcal S_{\mathcal L, E}} \mathcal L(\mathbf R\u)\quad &\hbox{if} \ \u\in \mathcal A
\vspace{0.1cm}
\\
 \ \!\!+\infty\quad &\hbox{otherwise in} \ H^1(\om,\mathbb R^3),
\end{array}\right.
\]
where $\mathcal S_{\mathcal L,E} $ is defined by \eqref{kernel}.
 
\begin{remark}\lab{cont} \rm It is worth noticing that \,$\widetilde{\mathcal G}^I\le \mathcal G^I\le \mathcal E^I$, since $\mathbf I\in\mathcal S_{\mathcal L,E}$, and it is straightforward to check that $\mathcal I^I,\ \widetilde{\mathcal G}^I,\ \mathcal G^I$ are all continuous with respect to the strong convergence in $H^1(\om;\mathbb R^3)$.
\end{remark}
%
\subsection{The variational convergence result}
The main result is stated in the next theorem, referring  to \eqref{qmin} for the notion of minimizing sequence.  \MMM
The technical assumption that $\partial\om$ has a finite number of connected components will be needed for applying an extension theorem about divergence-free vector fields from \cite{KMPT}.\KKK
\begin{theorem}
\label{mainth1comp}Assume that $\partial\Omega$ has a finite number of connected components, that
 \eqref{H2 on E}, \eqref{framind}, \eqref{Z1}, \eqref{reg}, \eqref{coerc}, \eqref{winc}, \eqref{shear2}, \eqref{L1} hold true and that
$\mathcal L(\mathbf e_3)<0$. 	\MMM Let $(h_j)_{j\in\mathbb N}\subset(0,1)$ be a vanishing sequence.
Let $(\overline{\mathbf y}_j)_{j\in\mathbb N}\!\subset\! H^1(\om,\mathbb R^3)$ be a minimizing sequence
of $\mathcal G_j^I$. If $\mathbf R_j\!\in\!\mathcal M(\overline{\mathbf y}_j)$ for every $j\in\mathbb N$,  then there are $\overline{\mathbf c}_j\in \mathbb R^3$, $j\in\mathbb N$,  such that the sequence \KKK
 \begin{equation*}
\overline{\u}_j(\x)\,:=\,{h_j}^{-1}\mathbf R_j^T\!\left\{
\big(\overline{\mathbf y}_j-\overline{\mathbf c}_j-\mathbf R_j\x\big)_\alpha\,\mathbf e_\alpha\,+\,(\overline{ y}_{j,3}-x_3)\mathbf e_3\,\right\}
\end{equation*}
is weakly compact in $H^1(\om,\mathbb R^3)$. Therefore up to subsequences, $\overline{\uu}_j\wconv \overline\u$ 
in $H^1(\om,\mathbb R^3)$
and also
\begin{equation*}   \mathcal G_j^I(\mathbf y_{j})\to \widetilde{\mathcal G}^I(\overline\u)= \min_{H^1(\om,\R^3)}\widetilde{\mathcal G}^I =\min_{H^1(\om,\R^3)}\mathcal G^I,\quad\qquad\mbox{as}\ j\to+\infty. 
\end{equation*}
\end{theorem}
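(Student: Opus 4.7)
\emph{Plan of proof.} The plan is to establish the claim via the standard three-step scheme of $\Gamma$-convergence (compactness, liminf, limsup) adapted to the obstacle and to the incompressibility constraint, and to conclude with an auxiliary identity $\min\widetilde{\mathcal G}^I=\min\mathcal G^I$. Since $(\overline\yy_j)$ is a minimizing sequence, it will suffice to prove that along a subsequence the rescaled displacements $\overline\uu_j$ converge weakly in $H^1(\om;\R^3)$ to some $\overline\uu$ with $\mathcal G_j^I(\overline\yy_j)\to\widetilde{\mathcal G}^I(\overline\uu)$, that $\widetilde{\mathcal G}^I(\overline\uu)\le\widetilde{\mathcal G}^I(\uu)$ for every admissible $\uu$ through the upper bound, and that the two minima coincide.

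For compactness and liminf I will start from $\mathcal G_j^I(\overline\yy_j)\le\mathcal G_j^I(\x)=0$, which, combined with coercivity \eqref{coerc}, the rigidity inequality \eqref{muller} applied with $\mathbf R_j\in\mathcal M(\overline\yy_j)$, and absorption of the load via the Sobolev embedding, yields $\int_\om|\nabla\overline\yy_j-\mathbf R_j|^2\,d\x=O(h_j^2)$. A suitable choice of $\overline{\mathbf c}_j$ normalizes $\overline\uu_j$ to zero mean, so Poincar\'e and Korn imply boundedness in $H^1(\om;\R^3)$; up to subsequences $\mathbf R_j\to\mathbf R\in SO(3)$ and $\overline\uu_j\rightharpoonup\overline\uu$. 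Expanding $\det\nabla\overline\yy_j\equiv 1$ at first order forces $\div\overline\uu=0$; quasicontinuity of the third trace together with the alternative $\mathbf R\mathbf e_3=\mathbf e_3$ from Lemma \ref{lemma alternative} transfers $\overline y_{j,3}\ge 0$ q.e.\ on $E$ into $\overline u_3\ge 0$ q.e.\ on $E$. For the bulk energy, frame indifference reduces the integrand to $\mathcal W(\x,\mathbf I+h_j\mathbf Z_j)$ with $\mathbf Z_j:=h_j^{-1}(\mathbf R_j^T\nabla\overline\yy_j-\mathbf I)$ bounded in $L^2(\om;\R^{3\times 3})$; a truncation on $\{h_j|\mathbf Z_j|\le\delta\}$, the expansion \eqref{regW}, and weak lower semicontinuity give $\liminf h_j^{-2}\int_\om\mathcal W\ge\int_\om\mathcal Q^I\bigl(\x,\mathbb E(\overline\uu)+\tfrac12 b_\alpha(\mathbf e_\alpha\otimes\mathbf e_3+\mathbf e_3\otimes\mathbf e_\alpha)\bigr)\,d\x\ge\mathcal I^I(\overline\uu)$, where $\mathbf b\in\R^2$ is the weak $L^2$-limit of the residual shear produced by the asymmetric horizontal/vertical splitting in the definition of $\overline\uu_j$. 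For the load, the decomposition $\overline\yy_j-\x=h_j\mathbf R_j\overline\uu_j+((\mathbf R_j-\mathbf I)\x)_\alpha\mathbf e_\alpha+\overline c_{j,\alpha}\mathbf e_\alpha$, the vanishing $\mathcal L(\mathbf e_\alpha)=0$ from \eqref{L1}, the shear condition \eqref{shear2}, and the energy bound force $\mathbf R\in\mathcal S_{\mathcal L,E}$ and $\limsup\bigl(-h_j^{-1}\mathcal L(\overline\yy_j-\x)\bigr)\ge -\mathcal L(\mathbf R\overline\uu)\ge-\max_{\mathbf R'\in\mathcal S_{\mathcal L,E}}\mathcal L(\mathbf R'\overline\uu)$, so that altogether $\liminf\mathcal G_j^I(\overline\yy_j)\ge\widetilde{\mathcal G}^I(\overline\uu)$.

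For the upper bound, given a smooth divergence-free $\uu\in\mathcal A$, I will take $\yy_j:=\mathbf R^*\x+h_j\mathbf R^*(\uu+h_j\boldsymbol\psi_j)$, with $\mathbf R^*\in\mathcal S_{\mathcal L,E}$ realizing the maximum in the definition of $\widetilde{\mathcal G}^I$ and $\boldsymbol\psi_j\in H^1_0(\om;\R^3)$ a divergence-free corrector killing the $O(h_j)$ defect of $\det\nabla\yy_j-1$. Solvability of the associated divergence equation at the correct scale is exactly where the finiteness of the number of connected components of $\partial\om$ enters, through the extension theorem of \cite{KMPT}. A Taylor expansion gives $\mathcal G_j^I(\yy_j)\to\mathcal G^I(\uu)$, and density of smooth incompressible admissible fields concludes $\limsup\inf\mathcal G_j^I\le\min\mathcal G^I$. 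Finally, to obtain $\min\widetilde{\mathcal G}^I=\min\mathcal G^I$ I observe that $\widetilde{\mathcal G}^I\le\mathcal G^I$ is immediate from $\mathcal I^I\le\int_\om\mathcal Q^I(\x,\mathbb E(\cdot))\,d\x$, while the reverse follows by taking a minimizer $\uu$ of $\widetilde{\mathcal G}^I$ with optimal shear $\mathbf b\in\R^2$ and passing to $\tilde\uu:=\uu+b_\alpha x_3\mathbf e_\alpha$: this $\tilde\uu$ still lies in $\mathcal A$ (since $x_3=0$ on $E$), is divergence-free, satisfies $\int_\om\mathcal Q^I(\x,\mathbb E(\tilde\uu))\,d\x=\mathcal I^I(\uu)$, and by Remark \ref{0L} together with the explicit description of $\mathcal S_{\mathcal L,E}$ in Lemma \ref{lemma alternative} one has $\mathcal L(\mathbf R'\tilde\uu)=\mathcal L(\mathbf R'\uu)$ for every $\mathbf R'\in\mathcal S_{\mathcal L,E}$, hence $\mathcal G^I(\tilde\uu)=\widetilde{\mathcal G}^I(\uu)$.

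The main obstacle I anticipate is the construction of the divergence-free recovery sequence at the correct scale: one must perturb $\x+h_j\uu$ by $h_j^2$-small corrections that restore $\det\nabla\yy_j\equiv 1$ exactly while simultaneously preserving the obstacle condition on $E$ and the $H^1$ convergence of the gradients at rate $o(h_j)$. A secondary difficulty, which distinguishes $\widetilde{\mathcal G}^I$ from $\mathcal E^I$, is the identification of the residual shear $\mathbf b\in\R^2$ in the liminf: the deliberately asymmetric vertical/horizontal splitting in the definition of $\overline\uu_j$ is forced by the obstacle geometry and creates a genuine gap between the naive Euclidean linearization $\int_\om\mathcal Q^I(\x,\mathbb E(\overline\uu))\,d\x$ and the true asymptotic bulk energy $\mathcal I^I(\overline\uu)$.
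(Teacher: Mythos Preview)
Your compactness and liminf sketch is close to the paper's (which in fact reduces to the compressible analysis of \cite{MPTOB} via $\mathcal G_j^I\ge\mathcal G_j$), and your argument for $\min\widetilde{\mathcal G}^I=\min\mathcal G^I$ via the shear shift $\tilde\uu=\uu+b_\alpha x_3\mathbf e_\alpha$ is exactly the one used there. The genuine gap is in your upper bound.

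Your recovery sequence $\yy_j=\mathbf R^*\x+h_j\mathbf R^*(\uu+h_j\boldsymbol\psi_j)$ cannot satisfy $\det\nabla\yy_j\equiv 1$. If $\dv\uu=0$ then $\det(\mathbf I+h_j\nabla\uu)=1+O(h_j^2)$, so a corrector that removes the leading defect must have $\dv\boldsymbol\psi_j$ equal to a quadratic expression in $\nabla\uu$, hence $\boldsymbol\psi_j$ is \emph{not} divergence-free; and even with the right Bogovskii solution you only obtain $\det\nabla\yy_j=1+O(h_j^3)$, which still gives $\mathcal W^I(\x,\nabla\yy_j)=+\infty$ and $\mathcal G_j^I(\yy_j)=+\infty$. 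An iterative or implicit-function correction to reach the constraint exactly is delicate at merely $H^1$ (or even $W^{1,p}$) regularity and must also respect the obstacle on $E$. The paper avoids this entirely by a \emph{Lagrangian flow} construction: one extends $\widetilde\uu:=\uu+b^*_\alpha x_3\mathbf e_\alpha$ to a compactly supported divergence-free field on a neighborhood of $\overline\om$ (this is where \cite{KMPT} and the finite number of boundary components enter), mollifies it at a scale $\eps_j=h_j^{\gamma/2}$ to get $\widetilde\uu_j\in C^1_c$, and sets $\ww_j(\x)=\zz_j(h_j,\x)$ where $\zz_j$ solves $\partial_t\zz_j=\widetilde\uu_j(\zz_j)$, $\zz_j(0,\x)=\x$. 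Since $\widetilde\uu_j$ is divergence-free near $\overline\om$, Liouville's theorem gives $\det\nabla\ww_j\equiv 1$ \emph{exactly}; Gr\"onwall-type bounds (Lemma~\ref{newflowlemma}) yield $h_j^{-1}(\ww_j-\x)\to\widetilde\uu$ in $W^{1,p}$ with $h_j\|\nabla(h_j^{-1}(\ww_j-\x))\|_\infty\to 0$, so \eqref{regW} applies and the energy converges. Finally one sets $\yy_j=\widetilde{\mathbf R}\ww_j+\beta_j\mathbf e_3$ with an explicit small $\beta_j>0$ to absorb the $L^\infty$ error and guarantee $y_{j,3}\ge h_j\widetilde u_3\ge 0$ q.e.\ on $E$. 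A further Bogovskii step is used only \emph{afterwards}, in the density argument that passes from $W^{1,p}$ ($p>3$) admissible incompressible fields to general $H^1$ ones. So the role you assign to \cite{KMPT} (``solvability of the associated divergence equation'') is not what actually happens: it provides the divergence-free extension that makes the flow volume-preserving, which is the key mechanism you are missing.
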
 
\begin{remark}\rm Since $\widetilde{\mathcal G}^I\le \mathcal G^I$ then equality  $\min\widetilde{\mathcal G}^I =\min\mathcal G^I$ is equivalent to $\argmin \mathcal G^I\subset \argmin \widetilde{\mathcal G}^I$ with possible strict inclusion. 
\end{remark} 
%
%

\section{\textbf{Proof of the variational convergence result}}\label{proofssection}
This section contains the proof of our main result.
We start by showing that sequences of deformations with equibounded energy correspond (up to suitably tuned rotations and translations of the horizontal components) to 
displacements that are equibounded in $H^1$.
\begin{lemma} (\textbf{compactness})\label{lemma compactness}
Assume that $E$, $\mathcal L$ and $\mathcal W$
fulfil \eqref{H2 on E}, \eqref{framind}, \eqref{Z1}, \eqref{reg}, \eqref{coerc}, \eqref{winc}, \eqref{L1} and $\mathcal L(\mathbf e_3) < 0$.
Let $(h_j)_{j\in\mathbb N}\subset(0,1)$ be a vanishing sequence.  \MMM Let $(\yy_j)_{j\in\mathbb N}\subset H^1(\om;\mathbb R^3)$ satisfy $\sup_{j\in\mathbb N}\mathcal G_j^I(\yy_j)<+\infty$. If
 $ {\mathbf R}_j\in \mathcal M(\yy_j)$ for every $j\in\mathbb N$, then
 every limit point of the sequence $(\mathbf R_j)_{j\in\mathbb N}$ belongs to 
 $\mathcal S_{\mathcal L, E},{\color{blue}}$ \KKK
 and by setting 
\begin{equation}\lab{calfa}
 c_{j,\alpha} = \,
 |\Omega|^{-1}\!\!\int_\Omega (\yy_j (\xx)-\mathbf R_j \xx)_\alpha\,d\xx,
\qquad
\alpha=1,2,
\end{equation}
\begin{equation}\displaystyle\lab{c3}
 c_{j,3} =
 -\min_{\xx\in E_{ess}} (\mathbf R_j\xx)_3,
\end{equation}
the sequence
$$ {h_j}^{-1}\big(\,\yy_j\,-\,\mathbf R_j\xx\,-\mathbf c_j\,\big)_{\alpha}\mathbf e_{\alpha}+{h_j}^{-1}(y_{j,3}-x_3)\mathbf e_3$$
is bounded in $H^1(\om;\mathbb R^3)$.  Moreover, $\displaystyle\inf_{j\in\mathbb N}\inf_{H^1(\om;\R^3)}\mathcal G^I_j > -\infty$.
\end{lemma}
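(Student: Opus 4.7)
\medskip
\noindent\textbf{Proof plan.} The strategy is to combine the rigidity inequality with a decomposition of the load that exposes the structure of the admissibility conditions, then to bound the sequence $\tilde{\mathbf u}_j$ (the one appearing in the statement) in $H^1$ through appropriate Poincar\'e- and Friedrichs-type inequalities. I write $\tilde u_{j,i}$ for the components of $\tilde{\mathbf u}_j$.

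Starting from $\mathcal G^I_j(\mathbf y_j)\le M$ together with $\mathbf y_j\in\mathcal A$, one gets $\det\nabla\mathbf y_j=1$ a.e., $y_{j,3}^*\ge 0$ q.e.\ on $E$, and $h_j^{-2}\int_\om\mathcal W(\x,\nabla\mathbf y_j)\,d\x\le M+h_j^{-1}\mathcal L(\mathbf y_j-\x)$. Coercivity \eqref{coerc} and the rigidity inequality \eqref{muller} then yield $\|\nabla\mathbf y_j-\mathbf R_j\|_{L^2}^2\le Ch_j^2\bigl(M+h_j^{-1}\mathcal L(\mathbf y_j-\x)\bigr)$. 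Two key algebraic identities for the load follow. First, expanding $\mathbf y_j-\x=h_j\tilde{\mathbf u}_j+((\mathbf R_j-\mathbf I)\x)_\alpha\mathbf e_\alpha+c_{j,\alpha}\mathbf e_\alpha$ and using $\mathcal L(\mathbf e_\alpha)=0$ for $\alpha=1,2$ (Lemma \ref{load1}) gives
\[
\mathcal L(\mathbf y_j-\x)\,=\,h_j\mathcal L(\tilde{\mathbf u}_j)\,+\,\mathcal L\bigl(((\mathbf R_j-\mathbf I)\x)_\alpha\mathbf e_\alpha\bigr),
\]
and the last term is $\le 0$ by \eqref{shear2}. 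Second, grouping around the admissible rigid motion $\mathbf R_j\x+\mathbf c_j$ (admissibility being ensured by the choice of $c_{j,3}$) yields $\mathcal L(\mathbf y_j-\x)=\Phi(\mathbf R_j,E,\mathcal L)+\mathcal L(\mathbf v_j)$ with $\mathbf v_j:=\mathbf y_j-\mathbf R_j\x-\mathbf c_j$ and $\Phi(\mathbf R_j,E,\mathcal L)\le 0$ by \eqref{L1}. Combining,
\[
h_j^{-2}\!\int_\om\!\mathcal W(\x,\nabla\mathbf y_j)\,d\x\,+\,h_j^{-1}|\Phi(\mathbf R_j,E,\mathcal L)|\,\le\,M+\mathcal L(\tilde{\mathbf u}_j).
\]

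The next step is to bound $\mathcal L(\tilde{\mathbf u}_j)$ by $\|\tilde{\mathbf u}_j\|_{H^1}$ and to close the loop via absorption. For the horizontal components $\tilde u_{j,\alpha}$, $\alpha=1,2$, the choice \eqref{calfa} of $c_{j,\alpha}$ guarantees $\int_\om\tilde u_{j,\alpha}\,d\x=0$, so Poincar\'e's inequality gives $\|\tilde u_{j,\alpha}\|_{H^1}\le C\|\nabla\tilde u_{j,\alpha}\|_{L^2}$, while rigidity controls $\|\nabla\tilde u_{j,\alpha}\|_{L^2}^2\le C(M+\mathcal L(\tilde{\mathbf u}_j))$. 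For the vertical component $\tilde u_{j,3}=h_j^{-1}(y_{j,3}-x_3)$, the obstacle condition and $x_3=0$ on $E$ imply $\tilde u_{j,3}^*\ge 0$ q.e.\ on $E$, so $(\tilde u_{j,3})^-$ vanishes q.e.\ on $E$; since $\capacity(E)>0$, Friedrichs' inequality yields $\|(\tilde u_{j,3})^-\|_{L^2}\le C\|\nabla\tilde u_{j,3}\|_{L^2}$. To control a potentially large positive mean of $\tilde u_{j,3}$, I exploit the sign hypothesis $\mathcal L(\mathbf e_3)<0$: decomposing $\mathcal L(\tilde u_{j,3}\mathbf e_3)=\mathcal L((\tilde u_{j,3}-\overline{\tilde u}_{j,3})\mathbf e_3)+\overline{\tilde u}_{j,3}\mathcal L(\mathbf e_3)$, an unbounded positive $\overline{\tilde u}_{j,3}$ would drive $\mathcal L(\tilde{\mathbf u}_j)\to-\infty$, contradicting $\mathcal L(\tilde{\mathbf u}_j)\ge -M$ which follows from $\int\mathcal W\ge 0$ in the displayed inequality. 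A Young-type absorption then concludes that $\|\tilde{\mathbf u}_j\|_{H^1}\le C$ uniformly in $j$.

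Once $\|\tilde{\mathbf u}_j\|_{H^1}$ is bounded, the displayed inequality forces $h_j^{-1}|\Phi(\mathbf R_j,E,\mathcal L)|\le C$, so $\Phi(\mathbf R_j,E,\mathcal L)\to 0$ and every accumulation point of $(\mathbf R_j)$ lies in $\mathcal S_{\mathcal L,E}$ by Lemma \ref{lemma alternative}. The lower bound $\inf_j\inf_{H^1(\om,\R^3)}\mathcal G^I_j>-\infty$ is an immediate consequence: the same chain of estimates applied to any competitor $\mathbf y\in\mathcal A$ with finite energy gives $\mathcal G^I_j(\mathbf y)\ge -C$ uniformly. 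The main technical obstacle is the bound on $\|\nabla\tilde u_{j,3}\|_{L^2}$: since $\nabla\tilde u_{j,3}=h_j^{-1}(\nabla y_{j,3}-\mathbf e_3^T)$ involves $\mathbf e_3^T$ rather than the third row of $\mathbf R_j$ which rigidity naturally controls, one needs the quantitative bound $|(\mathbf R_j)_{3\cdot}-\mathbf e_3^T|=O(h_j)$. This is obtained by combining the linear vanishing of $\Phi(\cdot,E,\mathcal L)$ in the directions normal to $\mathcal S_{\mathcal L,E}$ with $h_j^{-1}|\Phi(\mathbf R_j,E,\mathcal L)|\le C$.
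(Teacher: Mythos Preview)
Your proposal has a genuine circularity that you flag at the end but do not resolve. You want to bound $\|\tilde{\mathbf u}_j\|_{H^1}$ by absorbing $\mathcal L(\tilde{\mathbf u}_j)$ into the elastic energy, and for this you need $\|\nabla\tilde u_{j,3}\|_{L^2}$ controlled. But $\nabla\tilde u_{j,3}=h_j^{-1}(\nabla y_{j,3}-\mathbf e_3^T)$, whereas rigidity only controls $h_j^{-1}(\nabla y_{j,3}-(\mathbf R_j)_{3\cdot})$; the gap $h_j^{-1}|(\mathbf R_j)_{3\cdot}-\mathbf e_3^T|$ is precisely what you propose to recover from $h_j^{-1}|\Phi(\mathbf R_j,E,\mathcal L)|\le C$. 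Yet your route to this last bound goes through the displayed inequality and then through $\|\tilde{\mathbf u}_j\|_{H^1}\le C$, closing the loop. Neither your Friedrichs estimate for $(\tilde u_{j,3})^-$ nor your mean-value argument via $\mathcal L(\mathbf e_3)<0$ helps here, since both presuppose a bound on $\|\nabla\tilde u_{j,3}\|_{L^2}$ to control the oscillation term $\mathcal L((\tilde u_{j,3}-\overline{\tilde u}_{j,3})\mathbf e_3)$.

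The paper breaks this circularity by \emph{not} working with $\tilde u_{j,3}=h_j^{-1}(y_{j,3}-x_3)$ in the first round. Instead it sets $t_j:=h_j^{-1}\|(\yy_j-\mathbf R_j\x-\mathbf c_j)_3\|_{L^2}$, a quantity whose gradient is directly dominated by rigidity, and shows $t_j$ is bounded by a contradiction argument: if $t_j\to\infty$, the normalized $w_j:=t_j^{-1}h_j^{-1}(\yy_j-\mathbf R_j\x-\mathbf c_j)_3$ converges to a strictly negative constant, and the obstacle condition $y_{j,3}^*\ge 0$ on $E$---rewritten as $t_jh_jw_j^*+(\mathbf R_j\x)_3-\min_{E_{ess}}(\mathbf R_j\x)_3\ge 0$---is then violated (this needs a case analysis on $\liminf h_jt_j$ and on whether $\mathbf R_j\mathbf e_3=\mathbf e_3$). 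Only after $h_j^{-1}(\yy_j-\mathbf R_j\x-\mathbf c_j)$ is bounded does the paper extract $h_j^{-1}|\Phi(\mathbf R_j,E,\mathcal L)|\le C$, then $h_j^{-1}|\mathbf R_j\mathbf e_3-\mathbf e_3|\le C$ via \cite[Lemma 3.10]{MPTOB}, and finally the bound on $h_j^{-1}(y_{j,3}-x_3)$. Two smaller points: the lemma does not assume \eqref{shear2}, so your first load identity is not available here; and your ``combined'' displayed inequality with both $h_j^{-2}\int\mathcal W$ and $h_j^{-1}|\Phi|$ on the left does not follow from the two separate decompositions you wrote---the second one involves $\mathcal L(\mathbf v_j)$ with $\mathbf v_j\neq h_j\tilde{\mathbf u}_j$, the discrepancy being exactly the uncontrolled term $(x_3-(\mathbf R_j\x)_3-c_{j,3})\mathbf e_3$.
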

\begin{proof} \MMM Since $\mathcal G_j^I\ge \mathcal G_j$ the proof follows from the analogous  \cite[Lemma 4.1]{MPTOB} and we report here that proof for the sake of completeness and also making some steps more precise. \KKK

\textbf{Step 1}. \MMM We start by providing some estimates using the finiteness of $M:=\sup_{j\in\mathbb N}\mathcal G^I_j(\mathbf y_j)$. \KKK
Referring to \eqref{roty}, let  $ {\mathbf R}_j\in \mathcal M(\yy_j)$ for every $j\in\mathbb N$.
 Up to subsequences, 
${\mathbf R}_j\to\mathbf R$ for some $\mathbf R\in SO(3)$. Then we define $\mathbf c_j=(c_{j,1},c_{j,2},c_{j,3})$ by \eqref{calfa} and \eqref{c3}.
By
the rigidity inequality \eqref{muller} there exists a constant $C=C(\om) >0$ such that 
\begin{equation}\begin{array}{ll}
\label{estimate 0}
\displaystyle M&\displaystyle \ge\ \mathcal G_j^I(\yy_j) \,\ge\, 
 C\,  h_j^{-2}\!\! \int_\Omega |\nabla \yy_j- \mathbf R_j|^2 \,d\xx- h_j^{-1} 
 \mathcal L(\yy_j-\xx)\ 
\vspace{0.1cm} \\
&\displaystyle=\, C\,  h_j^{-2} \!\!\int_\Omega |\nabla \yy_j- \mathbf R_j|^2 \,d\xx- h_j^{-1} 
 \mathcal L(\yy_j-\mathbf R_j\xx-\mathbf c_j)
 -
 h_j^{-1} 
 \mathcal L(\mathbf R_j\xx-\xx+\mathbf c_j).
 \end{array}
\end{equation}
Thus, by \eqref{L1} and the definition of $c_{j,3}$ we get
\begin{equation}
\label{estimate 1}
 M \,\ge\, C\,  h_j^{-2} \int_\Omega |\nabla \yy_j- \mathbf R_j|^2 \,d\xx- h_j^{-1} 
 \mathcal L(\yy_j-\mathbf R_j\xx-\mathbf c_j)\,
\end{equation}
and Poincar\'e inequality (with constant denoted by $C_P$) along with Young inequality entail, for every $\varepsilon\!>\!0$, 
\begin{equation}\begin{aligned}
 \label{estimate 2}
 \displaystyle 
 h_j^{-1} \!
 \mathcal L\big(\,(\yy_j-{\mathbf R}_j\xx-{\mathbf c}_j)_\alpha \,{\mathbf e}_{\alpha} \big)
 & \le\, h_j^{-1} C_P \|\mathcal L\|_{*}
\Big ( \sum_{\alpha=1}^{2}\! \int_\Omega |\,(\nabla{\yy_j}-{\mathbf R}_j)_\alpha\, |^2\,d\xx \Big )^{1/2}
 \\
 &\displaystyle \le
 \frac {C_P \|\mathcal L\|^2_{*}} {2\varepsilon}
 \,+\,
 \frac {\varepsilon\,h_j^{-2}\,C_P} {2}\,
 \sum_{\alpha=1}^{2}\int_\Omega |(\nabla\yy_j-{\mathbf R}_j)_{\alpha}|^2\,d\xx. 
 \\
 \end{aligned}
\end{equation}
Estimates \eqref{estimate 1} and \eqref{estimate 2} together with Young inequality provide
\begin{eqnarray*}
\quad M \!\!\!\!&\ge&\!\!\!\! h_j^{-2}\! \left(\!C-\frac {\varepsilon\, C_P}{2}\!\right) \!\! \int_\Omega \!|\nabla \yy_j\!-\! \mathbf R_j|^2 d\xx
 -\frac {C_P \,\|\mathcal L\|^2_{*}}{2\,\varepsilon}
 - h_j^{-1} \!\mathcal L\big((\yy_j-\mathbf R_j\xx-\mathbf c_j)_3 \,\mathbf e_3 \big)\! 
 \\
 \nonumber
 \!\!\!\!&\ge&\!\!\!\! h_j^{-2}\! \left(\!C-\frac {\varepsilon\, C_P}{2}\right)  \!\int_\Omega \!|\nabla \yy_j\!-\! \mathbf R_j|^2 \,d\xx
 -\frac {C_P \|\mathcal L\|^2_{*}} {2\,\varepsilon}
 \\
 \nonumber
 && \ \
 - h_j^{-1} \,\|\mathcal L\|_{*} \,
 \big (\,
\| (\yy_j-\mathbf R_j\xx-\mathbf c_j)_3 \|_{L^2(\Omega)} 
+
\| \nabla (\yy_j-\mathbf R_j\xx)_3  \|_{L^2(\Omega)} 
 \,\big)
   \\
 \nonumber
\!\!\!\!&\ge&\!\!\!\! h_j^{-2}\! \left(\!C-\frac {\varepsilon\,C_P}{2}-\frac {\varepsilon} 2\right)  \!\int_\Omega \!|\nabla \yy_j\!-\! \mathbf R_j|^2 \,d\xx
 -\left(\frac {C_P } {2\,\varepsilon} + \frac 1 {2\,\varepsilon}\right) \|\mathcal L\|^2_{*} 
 \\
 \nonumber
 && \ \
 - h_j^{-1} \,\|\mathcal L\|_{*} \,
 \big (\,
\| (\yy_j-\mathbf R_h\xx-\mathbf c_j)_3 \|_{L^2(\Omega)}  .
 \end{eqnarray*}
 By choosing $\varepsilon= C/ (C_P+1)$, we get
 \beeq\begin{aligned}
 \label{estimate 4}
 &h_j^{-2}\ \frac {C}{2}  \int_\Omega \!|\nabla \yy_j\!-\! \mathbf R_j|^2 \,d\xx \\&\qquad
 \le\, M\, +\,
 \frac {(C_P+1)^2}{2C}
 \|\mathcal L\|^2_{*} 
 \,+\,
 h_j^{-1} \,\|\mathcal L\|_{*} \,
\| (\yy_j-\mathbf R_j\xx-\mathbf c_j)_3 \|_{L^2(\Omega;\mathbb R^3)}.
\end{aligned}\eneq
Thus, if we show that $\ h_j^{-1} 
\| (\yy_j-\mathbf R_j\xx-\mathbf c_j)_3 \|_{L^2(\Omega)}\,$ is 
uniformly bounded, then, due to estimate \eqref{estimate 4},
$\|h_j^{-1}(\nabla\yy_j-\mathbf R_j)\|_{L^2(\Omega)}$ is uniformly bounded too and Poincar\`e inequality entails uniform boundedness of $ h_j^{-1}
\| \yy_j-\mathbf R_j\xx-\mathbf c_j \|_{H^1(\Omega;\mathbb R^3)}$.

Let us define for every $j$
\beeq\label{tw}
 t_j\ :=\ h_j^{-1} \,
\| \,(\yy_j-\mathbf R_j\xx-\mathbf c_j)_3\, \|_{L^2(\Omega)} \qquad\mbox{and}\qquad  w_j \ :=t_j^{-1} h_j^{-1} \,
 (\yy_j-\mathbf R_j\xx-\mathbf c_j)_3.
\eneq
Then
\begin{equation}
 \label{estimate 4bis}
 \|w_j\|_{L^2(\Omega)}=1,\qquad |\nabla \yy_j\!-\! \mathbf R_j|^2\ =\
 \sum_{\alpha=1}^2 \,|\,\nabla (\yy_j\!-\! \mathbf R_j\xx)_\alpha\,|^2
 \ +\ 
h_j^{\,2} t_j^{\,2}\,|\nabla w_j|^2,
\end{equation} so that \eqref{estimate 0}, \eqref{estimate 2} and \eqref{L1} imply that for every $\eps>0$
\beeq
\begin{aligned}
 \label{estimate 5}
 & C t_j^{\,2}\int_\Omega|\nabla w_j|^2\,d\xx \,-\, t_j\,\mathcal L(w_j\,\mathbf e_3)-h_j^{-1}\mathcal L(\mathbf R_j\x-\x+c_{j,3}\mathbf e_3)\\
&\qquad
\le  M-C\,h_j^{-2}\, \sum_{\alpha=1}^2 \,\int_\Omega|\,
\nabla (\yy_j\!-\! \mathbf R_j\xx)_\alpha\,|^2\,d\xx\,+
\,h_j^{-1} \,
\mathcal L \big( \,(\yy_j\!-\! \mathbf R_j\xx-\mathbf c_j)_\alpha\,\mathbf e_\alpha\,\big)\\
&\qquad\le M-C\,h_j^{-2}\, \sum_{\alpha=1}^2 \,\int_\Omega|\,
\nabla (\yy_h\!-\! \mathbf R_j\xx)_\alpha\,|^2\,d\xx\,+
\, \frac {C_P\,\|\mathcal L\|^2_{*} }{2\,\varepsilon}\\&\qquad\quad
\,+\,
\frac {h_j^{-2}\, \varepsilon\,C_P} {2} \sum_{\alpha=1}^2\int_\Omega |\nabla (\yy_j\!-\! \mathbf R_j\xx)_\alpha\,|^2\,d\xx, 
\end{aligned}\eneq
and by choosing $\varepsilon=2C/C_P $ in \eqref{estimate 5}
we get
\begin{equation}
  \label{estimate 6}
  C\,
  t_j^{\,2}\int_\Omega|\nabla w_j|^2\,d\xx \,-\, t_j\,\mathcal L(w_j\,\mathbf e_3)-h_j^{-1}\mathcal L(\mathbf R_j\x-\x+c_{j,3}\mathbf e_3)\ \le\
  \frac{C_P^2\,\|\mathcal L\|^2_{*}}{4\,C}+M
\end{equation}
while, by choosing $\varepsilon=C/C_P ,$  \eqref{estimate 5} yields
\begin{eqnarray*}
  && 
  \frac 1 2 \,C\,h_j^{-2}\, \sum_{\alpha=1}^2 \,\int_\Omega\,
\nabla (\yy_j\!-\! \mathbf R_j\xx)_\alpha\,|^2\,d\xx\,+\,C\,
t_j^{\,2}\,\int_\Omega\,|\nabla w_j|^2\,-t_j\,
\mathcal L(w_j\,\mathbf e_3)\ \le
\\
\nonumber
&& \qquad \le \
\frac {C_P^2}{2C}\,\|\mathcal L\|^2_{*}+M +h_j^{-1}\mathcal L(\mathbf R_j\x-\x+c_{j,3}\mathbf e_3).
\end{eqnarray*}
Thus, by taking account of the fact that \eqref{L1} entails  $\mathcal L(\mathbf R_j\x-\x+c_{j,3}\mathbf e_3)\le 0$,  we get
\begin{equation}
  \label{estimate 8}
 \frac 1 2 \,C\,\frac {h_j^{-2}}{t_j^{\,2}}\, \sum_{\alpha=1}^2 \,\int_\Omega|\,
\nabla (\yy_j\!-\! \mathbf R_j\xx)_\alpha\,|^2\,d\xx\ \le\
\frac 1 {t_j}\,
\mathcal L(w_j\,\mathbf e_3)
\,+\,
\frac 1 {{t_j}^{\,2}}\,\frac {C_P^2}{2C}\,\|\mathcal L\|^2_{*}+\frac{M}{t_j^2}.
\end{equation}
Moreover, \eqref{estimate 6} and \eqref{L1} yield
 \begin{equation}
  \label{estimate 17}
 - h_j \frac{C_P^2\,\|\mathcal L\|^2_{*}}{4\,C}-Mh_j
-\, h_jt_j\,\mathcal L(w_j\,\mathbf e_3)\le \Phi(\mathbf R_j, E, \mathcal L)= \mathcal L(\mathbf R_j\x-\x+c_{j,3}\mathbf e_3)\ \le 0.
 \end{equation}

\MMM
{\textbf {Step 2}}.
 Here we prove that $t_j$ from \eqref{tw} is 
uniformly bounded, thus yielding as shown through the previous step the uniform boundedness of $ h_j^{-1}
\| \yy_j-\mathbf R_j\xx-\mathbf c_j \|_{H^1(\Omega;\mathbb R^3)}$.
 In order to check that the sequence $(t_j)_{j\in\mathbb N}$ is 
uniformly bounded
 we assume by contradiction that, up to subsequences, $\lim_{j\to+\infty}t_j=+\infty$.  \KKK
Normalization $\|w_j\|_{L^2(\om)}=1$ entails, for every $\varepsilon>0$,
\begin{eqnarray*}
   \mathcal L \big( w_j\,\mathbf e_3\big)
   &\le& \|\mathcal L\|_{*} \big(
   \,\|w_j\|_{L^2}+ \| \nabla w_j \|_{L^2}\,\big)
   \,=\,
   \|\mathcal L\|_{*} \big(
   \,1+ \| \nabla w_j \|_{L^2}\,\big)
   \\
   \nonumber &\le&
    \|\mathcal L\|_{*}\,\,+\,
    \frac {\|\mathcal L\|^2_{*}}{2\,\varepsilon}\,+\,
  \frac{\varepsilon}2
    \, \| \nabla w_j \|_{L^2}^2,
\end{eqnarray*}
and choosing $\varepsilon=C\,t_j^2$ we get, by \eqref{estimate 6},  
\begin{equation}
\label{estimate 10}
\frac {C}2t_j^2\,\int_\Omega|\nabla w_j|^2\,\,d\xx\ \le \ 
t_j\,\|\mathcal L\|_{*}\,\,+\, \frac {\|\mathcal L\|^2_{*}} {2C\,t_j}+M,
\end{equation}
thus $\int_\Omega |\nabla w_j|^2\,\,d\xx\to 0$, so by \eqref{estimate 4bis} $w_j\to w $  in $H^1(\Omega;\R^3)$ with $\nabla w= 0$ a.e. in $\om$,   that is, $w$  is a constant function since $\om $ is a connected open set. 
Combining estimates \eqref{estimate 8} and \eqref{estimate 10} we get
\begin{eqnarray*}
&&
 \frac 1 2  C\,\frac {h_j^{-2}}{t_j^{\,2}}\, \sum_{\alpha=1}^2 \,\int_\Omega|\,
\nabla (\yy_j\!-\! \mathbf R_j\xx)_\alpha\,|^2\,d\xx
\\ \nonumber
&& \qquad \le \
\frac 1 {t_j}
\left(
\|\mathcal L\|^2_{*}\,+\,
    \frac {\|\mathcal L\|^2_{*}}{2}\,+\,
    \frac {1}{2}\, \| \nabla w_j \|_{L^p}^2
\right)
\,+\,
\frac 1 {{t_j}^{\,2}}\,\frac {C_P^2}{2\,C_R\,C}\,\|\mathcal L\|^2_{*}+\frac{M}{t_j^2},
\end{eqnarray*}
hence 
\begin{equation*}
\frac{1}{h_j\,t_j}\,\nabla \big( \yy_j-\mathbf R_j\xx\big)_\alpha\ \to \ 0 \qquad \hbox{in }L^2(\Omega;\R^{3\times 3}) \qquad \hbox{if }\alpha=1,2,
\end{equation*}
and by definition of $w_j$ we have
\begin{equation*}
h_j^{-1}t_j^{-1}\,\left(\yy_j-\mathbf R_j\xx-\mathbf c_j\right)_3\ \to\ w \qquad \hbox{q.e. } \xx\in E\,.
\end{equation*}
Moreover, by \eqref{estimate 6} and \eqref{L1} we get
\begin{equation*}
\mathcal L ( w_j\,\mathbf e_3) \ \ge \ -\,\frac {C_P^{\,2}}{4\,C\, t_j}\, \|\mathcal L\|^2_{*}-\frac{M}{t_j}.
\end{equation*}
Hence, due to $\mathcal L ( w_j\,\mathbf e_3)\to
\mathcal L ( w\,\mathbf e_3)=w\,\mathcal L(\mathbf e_3)$, we have $w\,\mathcal L (\mathbf e_3)\!\ge\! 0,$ thus, by taking  account of $\mathcal L (\mathbf e_3) < 0,$ we get $w\le 0$ and eventually, by $\|w_j\|_{L^2}=1$, we obtain $w\ < \ 0$.

 We notice now that if we set $L:=\liminf_{j\to+\infty} h_jt_j$, then  either $ L\in (0, +\infty]$ or $L= 0$. Assume first that $ L\in (0, +\infty)$. Since on a subsequence $\mathbf R_j\to \mathbf R\in SO(3)$,  we have
for q.e. $\xx\!\in\! E_{ess}$
\begin{equation}\begin{array}{rl} \label{estimate 16a}
\displaystyle\frac {y_{j,3}^*} {\ h_j\,t_j}=
&
\displaystyle  \frac {y_{j,3}^*-(\mathbf R_j\xx)_3-\mathbf c_{j,3}} {\ h_j\,t_j}
 \,+\,
 \frac {(\mathbf R_j\xx)_3+\mathbf c_{j,3}} {\ h_j\,t_j}\to
 \displaystyle w + L^{-1}\{(\mathbf R\xx)_3 -\min_{E_{ess}} (\mathbf R\xx)_3 \}
 \end{array} \end{equation}
 \MMM as $j\to+\infty$ (along a suitable subsequence). \KKK 
Since $$\displaystyle \min_{E_{ess}} \left\{(\mathbf R\xx)_3 -\min_{E_{ess}} ((\mathbf R\xx)_3 )\right\}=0$$ then there exists $E'\subset E_{ess}$ with $\capacity E' > 0$ such that $$(\mathbf R\xx)_3 -\min_{E_{ess}} (\mathbf R_xx)_3  < -\frac{wL}{2}$$
on $E'$, so that by \eqref{estimate 16a}
\[ \displaystyle\frac {y_{j,3}^*} {\ h_j\,t_j}\to w + L^{-1}\{(\mathbf R\xx)_3 -\min_{E_{ess}} (\mathbf R\xx)_3 \} < \frac{w}{2}< 0 \]
 for q.e. $\x\in E',$ a contradiction since
\MMM the assumption $\sup_{j\in\mathbb N}\mathcal G_j^I(\mathbf y_j)<+\infty$ implies $\mathbf y_j\in\mathcal A$, i.e., $y_{j,3}^*\ge 0$ q.e. in $E$. \KKK
  If $L= +\infty$ then  by arguing as in estimate \eqref{estimate 16a} we easily get $h_j^{-1}t_j ^{-1}y_{j,3}^*\to w < 0$ for q.e. $\x\in E'$ which is again a contradiction. Therefore we are left to assume that, up to subsequences, $h_jt_j\to 0$. 

In order to complete this step
we notice that either $\mathbf R_j\mathbf e_3\not=\mathbf e_3$ for $j$ large enough or $\mathbf R_j\mathbf e_3=\mathbf e_3$ for infinitely many $j$, and we separately treat these two cases. In the first case, by taking  account that
$\mathcal L(\mathbf e_3)<0,$
 \cite[Lemma 3.10]{MPTOB} entails 
\begin{equation}
 \label{estimate 15}
 \limsup_{j\to +\infty} \
\frac
{\displaystyle
\Phi(\mathbf R_j,E,\mathcal L)}{|\mathbf R_j\mathbf e_3-\mathbf e_3 |}
\ < \ 0.
\end{equation}
By \eqref{estimate 17} 
we get therefore
\begin{equation}\lab{gamma}
\gamma:=\liminf _{j\to +\infty}\frac {h_jt_j}{|\mathbf R_j\mathbf e_3-\mathbf e_3 |} > 0
\end{equation}
and for large enough $j$ \eqref{gamma} yields
\begin{equation*}
 |\mathbf R_j\mathbf e_3-\mathbf e_3 |\,\le\, \, \frac{2h_jt_j}{\gamma} \,.
\end{equation*} 
Hence, if $\displaystyle\x_j\in \argmin_{E_{ess}} \{ (\mathbf R_j\xx)_3 \}$ for every $j$, so we may assume that, up to subsequences, $\x_j\to \overline\x\in E_{ess}$, then  for every $\x\in E_{ess}$
\[\begin{aligned} \displaystyle\theta_j(\x):&=(\mathbf R_j\xx)_3-\min_{\x\in E_{ess}} ((\mathbf R_j\xx)_3 )=( \mathbf R_j(\xx-\x_j))_3=( \mathbf R_j(\xx-\x_j))_3-(\x-\x_j)_3\\
&\displaystyle=(\x-\x_j)\cdot(\mathbf R_j^T-\mathbf I)\mathbf e_3\le |\x-\x_j||\mathbf R_j^T\mathbf e_3-\mathbf e_3|\le  \frac{2h_jt_j}{\gamma}|\x-\x_j|
\end{aligned}
\]
and then we get, for $j$ large enough,  
\beeq\label{thetaj} \theta_j(\x)\le \frac{2h_jt_j}{\gamma}|\x-\overline\x|\le \frac{h_jt_j}{4}|w|
\qquad\mbox{for every $\x\in E'':= B_{\frac{\gamma |w|}{16}}(\overline\x)\cap E_{ess}$.}\eneq
Moreover, by taking  account  of $w_j\to w$ in $H^1(\om;\mathbb R^3)$ we get  
\begin{equation*}
\displaystyle \yy_{j,3}^*=
  h_jt_jw_j
 \,+ \theta_j
  = h_jt_jw+\theta_j+\delta_j\qquad\mbox{where $\delta_j:=h_jt_j(w_j-w)$},
 \end{equation*}
 with $h_j^{-1}t_j^{-1}\delta_j\to 0$ in $H^1(\om;\mathbb R^3)$ and thus, up to subsequences, $\capacity$ quasi uniformly in $\om$. Since $\capacity E'' > 0$ we may assume that there exists $E'''\subset E''$ with $\capacity E''' > 0$ such that $h_j^{-1}t_j^{-1}\delta_j\to 0$ uniformly in $E'''$ hence for $j$ large enough
 $$w+h_j^{-1}t_j^{-1}\delta_j(\x) < \frac{w}{2}\quad \hbox{q.e.}\ \x\in E'''.$$ 
 From the latter estimate and \eqref{thetaj} we get for $j$ large enough
 \[
 \displaystyle h_j^{-1}y_{j,3}^*(\x)\le \frac{t_j}{4}w  < \frac w4<0 \qquad \hbox{q.e.}\ \x\in E''',
 \]
 a contradiction since 
$y_{j,3}^*(\x)\ge 0$ for q.e. $\xx\in E$. \MMM
 In the second case we may assume that $\mathbf R_j\mathbf e_3=\mathbf e_3$ for every $j$ so $c_{j,3}=0$ and $x_3=0\Rightarrow (\mathbf R_j\x)_3=0$ for every $j$. In particular $w_j\to w<0$ in $H^1(\Omega;\R^3)$ implies
 \[
 t_j^{-1}h_j^{-1}y_{j,3}(\x)=t_j^{-1}h_j^{-1}(y_{j,3}(\x)-(\mathbf R_j \x)_3-c_{j,3})=w_j(\x)\to w<0\qquad \mbox{q.e.  $\x\in E$}.
 \] 
again contradicting $y_{j,3}^*(\x)\ge 0$ for q.e. $\xx\in E$.
This proves that $(t_j)_{j\in\mathbb N}$ is a bounded sequence.\\

{\bf{Step 3.}} We have shown that $ h_j^{-1}
\| \yy_j-\mathbf R_j\xx-\mathbf c_j \|_{H^1(\Omega;\mathbb R^3)}$ is uniformly bounded, yielding in particular that $(t_jw_j)_{j\in \mathbb N}$ is a bounded sequence in $H^1(\Omega;\mathbb R^3)$. But then \eqref{estimate 17} yields
\begin{equation*}
 -M'h_j\le \Phi(\mathbf R_j, E, \mathcal L)= \mathcal L(\mathbf R_j\x-\x+c_{j,3}\mathbf e_3)\ \le 0
\end{equation*}
for some suitable constant $M'>0$ independent of $j$. If, up to subsequences, $\mathbf R_j\to \mathbf R$, we get
$
\Phi(\mathbf R_j, E, \mathcal L)\to \Phi(\mathbf R, E, \mathcal L)=0
$
thus proving that
$\mathbf R\in \mathcal S_{\mathcal L, E}$. \KKK

Therefore, in order to end the proof, we have  to show   that  the sequence $(h_j^{-1}(y_{j,3}-x_3))_{j\in\mathbb N}$ is equibounded in $H^1(\Omega;\R^3)$. \MMM We claim that the sequence $(h_j^{-1}|\mathbf R_j\mathbf e_3-\mathbf e_3|)_{j\in\mathbb N}$ is bounded. Indeed,
 if we assume that $\mathbf R_j\mathbf e_3\not = \mathbf e_3$ for every $j$ large enough,  \eqref{estimate 15}  entails
 \[
\gamma':= \liminf_{j\to+\infty}\frac{h_j}{|\mathbf R_j\mathbf e_3-\mathbf e_3|}>0, 
 \]
 so that
 $|\mathbf R_{j}\mathbf e_3- \mathbf e_3|\le {2h_j }/{\gamma'}$ for $j$ large enough, thus proving the claim. \KKK
 As a consequence, since
\[\begin{aligned}
\displaystyle\left |h_{j}^{-1}(y_{j,3}-x_3)\right |
\displaystyle&\le \left |h_{j}^{-1}(y_{j,3}-(\mathbf R_{j}\xx)_3- c_{j,3})\right |+h_{j}^{-1} |(\mathbf R_{j}\xx)_3-x_3|
\\&\le \left |h_{j}^{-1}(y_{j,3}-(\mathbf R_{j}\xx)_3- c_{j,3})\right |+h_{j}^{-1}|\mathbf R_{j}\mathbf e_3- \mathbf e_3|\,|\xx|
\end{aligned}
\]
and
\[\begin{aligned}
\displaystyle\left |h_{j}^{-1}\nabla(y_{j,3}-x_3)\right |&\le
\displaystyle h_{j}^{-1}\left |\nabla(y_{j,3}-(\mathbf R_{j}\x)_3)\right|+h_j^{-1}|\nabla((\mathbf R_j\x)_3-x_3)|\\&\le h_j^{-1}\left |\nabla(y_{j,3}-(\mathbf R_{j}\x)_3)\right|+ h_{j}^{-1} |\mathbf R_{j}\mathbf e_3-\mathbf e_3|,
\end{aligned}
\]
we obtain that the sequence $(h_j^{-1}(y_{j,3}-x_3))_{j\in\mathbb N}$ is indeed equibounded in $H^1(\Omega;\R^3)$.\\

\MMM
{\bf{Step 4}}.
Eventually, we prove the last statement by contradiction, assuming that \KKK there exists a sequence $(\overline\yy_j)_{j\in\mathbb N}\subset {\mathcal A}$ such that $\mathcal G_j^I(\overline\yy_j)\to -\infty$. In such case,  $\mathcal G_j^I(\overline\yy_j)\le 0$ for $j$ large enough so, as we have just proved, there exist sequences $(\mathbf R_j)_{j\in\mathbb N}\subset SO(3)$ and $(\mathbf {\bar c}_j)_{j\in\mathbb N}\subset \mathbb R^3$ and there exists $\v\in H^1(\om;\R^3)$ such that, up to subsequences, 
$$\v_j:=h_j^{-1}(\overline\yy_j-\mathbf R_j\xx-\mathbf {\bar c}_j)\wconv \v$$
weakly in $H^1(\om ;\mathbb R^3)$. Therefore by \eqref{L1}
$$\displaystyle\liminf_{j\to +\infty}\mathcal G_j^I(\overline\yy_j)\ge -\limsup_{j\to +\infty}\mathcal L(\v_j)=-\mathcal L(\v)> -\infty,$$
a contradiction.
\end{proof}

\begin{lemma}
\label{Lemma 3.8} (\textbf{Lower bound})
Assume that $E$, $\mathcal L$, $\mathcal W$
fulfill the conditions \eqref{H2 on E}, \eqref{framind}, \eqref{Z1}, \eqref{reg}, \eqref{coerc}, \eqref{winc}, \eqref{shear2}, \eqref{L1} and $\mathcal L(\mathbf e_3) < 0$.
Let $(h_j)_{j\in\mathbb N}\subset(0,1)$ be a vanishing sequence. Let $(\yy_j)_{j\in\mathbb N}\subset H^1(\om;\mathbb R^3)$ satisfy  $\sup_{j\in\mathbb N}\mathcal G_j^I(\yy_j)<+\infty$. Let   $\mathbf R_j\in \mathcal M(\yy_j)$ for every $j\in\mathbb N$. 
  If
\beeq\lab{seq}\u_j(\x)\,:=\,{h_j}^{-1}\mathbf R_j^T\!\left\{
\big(\mathbf y_j-\mathbf c_j-\mathbf R_j\x\big)_\alpha\,\mathbf e_\alpha\,+\,(y_{j,3}-x_3)\mathbf e_3\,\right\},\quad j\in\mathbb N,
\eneq
where $\mathbf c_j$ are defined by \eqref{calfa}-\eqref{c3},
then,  up to passing to a not relabeled subsequence, we have $\u_j\wconv \u$ weakly in $H^1(\Omega;\R^3)$, $\u\in\mathcal A$ and
\begin{equation*}
\displaystyle\liminf_{j\to +\infty}\mathcal G_j^I(\yy_j)\ge \widetilde{\mathcal G}^I(\u).
\end{equation*}
\end{lemma}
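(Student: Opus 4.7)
The plan is to extract from Lemma \ref{lemma compactness} a (not relabeled) subsequence along which $\mathbf R_j\to\mathbf R\in\mathcal S_{\mathcal L,E}$ and $\u_j$ is bounded in $H^1(\om;\R^3)$; a further extraction gives $\u_j\wconv\u$ weakly in $H^1(\om;\R^3)$. Lemma \ref{lemma alternative} then forces $\mathbf R\mathbf e_3=\mathbf e_3$. The argument splits into three components: checking $\u\in\mathcal A$, the elastic lower bound, and the force term.

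For the obstacle condition, I would read off from \eqref{seq} the identity $h_j(\mathbf R_j\u_j)_3=y_{j,3}-x_3$. Since $\yy_j\in\mathcal A$ and $x_3\equiv 0$ on $E$, this yields $(\mathbf R_j\u_j)_3^*=h_j^{-1}y_{j,3}^*\ge 0$ q.e.\ on $E$. The weak convergence $\u_j\wconv\u$ together with $\mathbf R_j\to\mathbf R$ gives $\mathbf R_j\u_j\wconv\mathbf R\u$ weakly in $H^1$, hence $(\mathbf R_j\u_j)_3\wconv(\mathbf R\u)_3=(\mathbf R^T\mathbf e_3)\cdot\u=\mathbf e_3\cdot\u=u_3$ weakly in $H^1(\om)$. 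Since the obstacle cone $\{w\in H^1(\om):w^*\ge 0\text{ q.e.\ on }E\}$ is convex and strongly closed, hence weakly closed, I conclude $u_3^*\ge 0$ q.e.\ on $E$, i.e., $\u\in\mathcal A$.

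For the elastic lower bound I would differentiate \eqref{seq} componentwise to obtain
\[\nabla\yy_j=\mathbf I+h_j\mathbf R_j\nabla\u_j+(\mathbf I-\mathbf e_3\otimes\mathbf e_3)(\mathbf R_j-\mathbf I),\]
set $\mathbf F_j:=h_j^{-1}(\mathbf R_j^T\nabla\yy_j-\mathbf I)$, and use \eqref{framind} to rewrite $\mathcal W(\x,\nabla\yy_j)=\mathcal W(\x,\mathbf I+h_j\mathbf F_j)$ with
\[\mathbf F_j=\nabla\u_j-(\mathbf R_j^T\mathbf e_3)\otimes h_j^{-1}(\mathbf R_j^T\mathbf e_3-\mathbf e_3).\]
Step 3 of Lemma \ref{lemma compactness} gives $|\mathbf R_j^T\mathbf e_3-\mathbf e_3|=O(h_j)$, while $|\mathbf R_j^T\mathbf e_3|=1$ forces its third component to be $O(h_j^2)$, so along a further subsequence $h_j^{-1}(\mathbf R_j^T\mathbf e_3-\mathbf e_3)\to -b_\alpha\mathbf e_\alpha$ for some constant $\mathbf b\in\R^2$. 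Since $\mathbf F_j$ is bounded in $L^2$ by the rigidity estimate \eqref{muller}, it follows that $\mathbf F_j\wconv\mathbf F:=\nabla\u+b_\alpha\,\mathbf e_3\otimes\mathbf e_\alpha$ weakly in $L^2(\om;\R^{3\times 3})$, and crucially
\[\mathrm{sym}\,\mathbf F=\mathbb E(\u)+\tfrac12\,b_\alpha(\mathbf e_\alpha\otimes\mathbf e_3+\mathbf e_3\otimes\mathbf e_\alpha),\]
matching the form in the definition of $\mathcal I^I$. Taylor expanding the incompressibility constraint $\det(\mathbf I+h_j\mathbf F_j)=1$ yields $\tr\mathbf F_j\to 0$, so $\dv\u=0$ and $\mathcal Q^I(\x,\mathbf F)=\mathcal Q(\x,\mathbf F)$. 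A standard truncation argument based on \eqref{regW}, applied on the set $\{h_j|\mathbf F_j|\le\sqrt{h_j}\}$ where $\mathbf I+h_j\mathbf F_j\in\mathcal U$, combined with weak-$L^2$ lower semicontinuity of the convex quadratic $\mathbf F\mapsto\int_\om\mathcal Q(\x,\mathrm{sym}\,\mathbf F)\dd\x$, gives
\[\liminf_{j\to\infty}h_j^{-2}\!\int_\om\mathcal W^I(\x,\nabla\yy_j)\dd\x\;\ge\;\int_\om\mathcal Q^I(\x,\mathbf F)\dd\x\;\ge\;\mathcal I^I(\u).\]

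For the force term, the identity $\yy_j-\xx=h_j\mathbf R_j\u_j+((\mathbf R_j-\mathbf I)\xx)_\alpha\mathbf e_\alpha+c_{j,\alpha}\mathbf e_\alpha$ is read off \eqref{seq} componentwise; combined with $\mathcal L(c_{j,\alpha}\mathbf e_\alpha)=0$ from \eqref{L1} and $\mathcal L(((\mathbf R_j-\mathbf I)\xx)_\alpha\mathbf e_\alpha)\le 0$ from \eqref{shear2}, it yields
\[-h_j^{-1}\mathcal L(\yy_j-\xx)\;\ge\;-\mathcal L(\mathbf R_j\u_j)\;\longrightarrow\;-\mathcal L(\mathbf R\u)\;\ge\;-\max_{\mathbf R'\in\mathcal S_{\mathcal L,E}}\mathcal L(\mathbf R'\u),\]
the convergence following from $\mathbf R_j\u_j\wconv\mathbf R\u$ in $H^1$ and continuity of $\mathcal L$, and the final inequality from $\mathbf R\in\mathcal S_{\mathcal L,E}$. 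Summing with the elastic bound gives $\liminf_j\mathcal G_j^I(\yy_j)\ge\widetilde{\mathcal G}^I(\u)$. The main delicate step I anticipate is the identification of the limit strain: the rotation drift $h_j^{-1}(\mathbf R_j^T\mathbf e_3-\mathbf e_3)$ survives in $\mathbf F$ and injects precisely the two shears $b_\alpha(\mathbf e_\alpha\otimes\mathbf e_3+\mathbf e_3\otimes\mathbf e_\alpha)$ that are minimized in $\mathcal I^I$, which is exactly the mechanism responsible for the Gamma-limit being $\widetilde{\mathcal G}^I$ rather than the naive $\mathcal E^I$.
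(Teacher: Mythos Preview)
Your argument is correct and follows the same overall strategy as the paper. The only structural difference is that the paper, after establishing $\u_j\wconv\u$, $\u\in\mathcal A$ and $\dv\u=0$, simply invokes the compressible lower bound \cite[Lemma~4.3]{MPTOB} together with $\mathcal G_j^I\ge\mathcal G_j$ to obtain
\[
\liminf_{j\to+\infty}\mathcal G_j^I(\yy_j)\ \ge\ \min_{\mathbf b\in\R^2}\int_\om\mathcal Q\big(\x,\mathbb E(\u)+\tfrac12 b_\alpha(\mathbf e_\alpha\otimes\mathbf e_3+\mathbf e_3\otimes\mathbf e_\alpha)\big)\,d\x-\mathcal L(\mathbf R\u),
\]
whereas you reconstruct that inequality from scratch: your computation of $\mathbf F_j=h_j^{-1}(\mathbf R_j^T\nabla\yy_j-\mathbf I)=\nabla\u_j-(\mathbf R_j^T\mathbf e_3)\otimes h_j^{-1}(\mathbf R_j^T\mathbf e_3-\mathbf e_3)$, the identification of the limit shear $b_\alpha(\mathbf e_3\otimes\mathbf e_\alpha)$ via $|\mathbf R_j^T\mathbf e_3-\mathbf e_3|=O(h_j)$ from Step~3 of Lemma~\ref{lemma compactness}, and the treatment of the force term via \eqref{shear2} are exactly the ingredients behind that cited lemma. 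Your handling of the obstacle condition through weak closedness of the cone $\{w:w^*\ge 0\text{ q.e.\ on }E\}$ is equivalent to the paper's pointwise q.e.\ argument via \cite[Lemma~A1]{BBGT}, and your derivation of $\dv\u=0$ through $\det(\mathbf I+h_j\mathbf F_j)=1$ is cleaner than the paper's (which writes $\nabla\yy_j=\mathbf R_j(\mathbf I+h_j\nabla\u_j)$, an identity that only holds when $\mathbf R_j^T\mathbf e_3=\mathbf e_3$).
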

\begin{proof} Finiteness of $\mathcal G^I_j(\yy_j)$ imply $\yy_j\in\mathcal A$ for every $j\in\mathbb N$.
Due to Lemma \ref{lemma compactness}, the sequence defined in \eqref{seq} is equibounded in $H^1(\om;\mathbb R^3)$ hence there exists $\u\in H^1(\om;\mathbb R^3)$ such that up to subsequences $\u_j\wconv\u$ in $H^1(\om;\mathbb R^3)$.\\
Moreover since
\begin{equation*}\begin{aligned}
1&=\det\nabla\mathbf y_j=\det(\mathbf R_j(\mathbf I+h_{j}\nabla\u_{j}))=\det(\mathbf I+h_{j}\nabla\u_{j})=\\
&= 1+h_{j}\dv\u_j-\frac{1}{2}h_{j}^{2}(\mathrm{Tr}(\nabla\u_{j})^{2}-(\Tr \nabla\u_{j})^{2})+h_{j}^{3}\det \nabla\u_j
\end{aligned}
\end{equation*}
a.e. in $\om$ we get
\begin{equation*}\lab{TrB}
\dv \u_{j}=
\frac{1}{2}h_{j}(\mathrm{Tr}(\nabla\u_{j})^{2}-(\Tr \nabla\u_{j})^{2})-h_{j}^{2}\det \nabla\u_j.
\end{equation*}
By taking into account that $\nabla\u_j$ are uniformly bounded in $L^2$ we get $h_j^{\alpha}|\nabla\u_j|\to 0$ a.e. in $\om$ for every $\alpha > 0$ hence
 $\dv \u_{j}=\frac{1}{2}h_{j}(\mathrm{Tr}(\nabla\u_{j})^{2}-(\Tr \nabla\u_{j})^{2})-h_{j}^{2}\det \nabla\u_j\to 0$ a.e. in $\om$. 
Since the weak convergence of $\nabla\u_j$ implies $\dv \u_{j}\wconv \dv \u$ weakly in $L^2(\Omega)$ we get $\dv\u=0$ a.e. in $\om$. 
By recalling \cite[Lemma A1]{BBGT}
we get, again up to subsequences, $\u_j^*(\xx)\to \uu^*(\xx)$ for q.e. $\xx\in E$ hence by taking  account of  
$$ u_{j,3}^*(\x)={h_j}^{-1} y_{j,3}^*(\x)\ge 0$$
 for q.e. $\xx\in E$ we get $ u_{3}^*(\x)\ge 0$   for q.e. $\xx\in E$ that is $\uu\in \mathcal A$.
 By taking account of  $\mathcal G_j^I\ge \mathcal G_j$, of $\dv\uu=0$, and of $\mathbf R_j\to\mathbf R\in \mathcal S_{\mathcal L,E}$ up to subsequences as shown in Lemma \ref{lemma compactness}, we may invoke  \cite[Lemma 4.3]{MPTOB}, which  entails 
 that
\begin{equation*}\begin{aligned}
\displaystyle\liminf_{j\to +\infty}\mathcal G_j^I(\yy_j)&\ge\liminf_{j\to +\infty}\mathcal G_j(\yy_j)\\
& \displaystyle\ge \min_{\mathbf b\in \mathbb R^2}\int_\om \mathcal Q(\x, \mathbb E(\u)+\tfrac{1}{2}b_{\alpha}(\mathbf e_{\alpha}\otimes\mathbf e_3+\mathbf e_{3}\otimes\mathbf e_\alpha))\,d\x-\mathcal L(\mathbf R\u)\ge \widetilde{\mathcal G}^I(\u)
\end{aligned}
\end{equation*}
thus proving the claim.
\end{proof}
\begin{remark}\lab{shear3} \rm If condition \eqref{shear2} is not satisfied then the claim of Lemma \ref{Lemma 3.8} may fail. Indeed, the example \cite[Remark 4.5]{MPTOB} applies also to the incompressible case.
\end{remark}

  \MMM
In the next auxiliary lemma we provide some estimates for  the Lagrangian flow 
\beeq \lab{flowbis}\left\{\begin{array}{ll} &\displaystyle\frac{\partial  {\bf z}}{\partial t}(t,\x)=\v({\bf z}(t,\x))\qquad t> 0\\
&\\
& {\bf z}(0,\x)=\x
\end{array}\right.  
\eneq
 associated to a vector field $\v\in C_c^1(\R^3,\R^3)$. 
  
\begin{lemma}\label{newflowlemma}
 Let $\v\in C_c^1(\R^3,\R^3)$ be satisfying $\mathrm{div}\, \v=0$  in an open neighborhood $\om_*$ of $\overline\om$.
Let  $\zz\in C^1([0,+\infty)\times \R^3; \mathbb R^3)$ be the unique global solution to \eqref{flowbis}. Then there exists $T>0$ such that
 \begin{equation}
\zz(t,\x)\in\Omega_*\quad\mbox{and}\quad\det\nabla \zz(t,\x)=1\label{flux1}\quad \quad\forall\ t\in [0,T]\;\;\;\forall \x\in \overline\Omega.
\end{equation}
Moreover \KKK
\begin{equation}\label{nuova1}\MMM
\sup_{\x\in\overline\om}|\zz(t,\x)-\x|\le t\|\v\|_\infty\,\exp(t\|\nabla \v\|_\infty)\qquad\forall\ t\ge 0,\KKK
\end{equation}
\begin{equation}
 \sup_{\x\in\overline\Omega}|t^{-1}(\zz(t,\x)-\x)- \v(\x) |\le \|\v\|_{L^{\infty}}(\exp(t\|\nabla\v\|_{L^{\infty}})-1)\qquad  \forall\ t>0,\label{flux2}
\end{equation}
\begin{equation}\label{nuova2}\MMM
\sup_{\x\in\overline\om}|\nabla\zz(t,\x)|\le 3\exp(t\|\nabla v\|_\infty)\qquad  \forall\ t\ge 0,\KKK
\end{equation}
\begin{equation}\displaystyle
\sup_{\x\in\overline\Omega}|(\nabla\zz(t,\x)-\mathbf I)|\le 3(\exp(t\|\nabla\v\|_{L^{\infty}})-1)\qquad \forall\ t\ge 0,\label{flux3}
\end{equation}
where $\nabla \zz$ denotes the gradient of $\zz$ with respect to the $\x$ variable.
\end{lemma}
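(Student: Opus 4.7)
The plan is to establish the five conclusions through standard ODE arguments based on Gronwall's inequality, combined with Jacobi's formula for the Jacobian determinant. Since $\v\in C^1_c(\R^3,\R^3)$ is globally Lipschitz and bounded, the Cauchy–Lipschitz theorem already ensures that \eqref{flowbis} admits a unique solution $\zz\in C^1([0,+\infty)\times\R^3;\R^3)$, so only quantitative estimates remain to be derived.

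First I would prove the position estimates \eqref{nuova1} and \eqref{flux2}. Integrating \eqref{flowbis} in time yields $\zz(t,\x)-\x=\int_0^t\v(\zz(s,\x))\,ds$, whence $|\zz(t,\x)-\x|\le t\|\v\|_\infty$, and \eqref{nuova1} follows since $\exp(t\|\nabla\v\|_\infty)\ge 1$. For \eqref{flux2} I write
\[
t^{-1}(\zz(t,\x)-\x)-\v(\x)=t^{-1}\!\!\int_0^t\!\big[\v(\zz(s,\x))-\v(\x)\big]\,ds,
\]
bound the integrand by $\|\nabla\v\|_\infty|\zz(s,\x)-\x|\le s\|\v\|_\infty\|\nabla\v\|_\infty$, and integrate to conclude using $e^x-1\ge x$ for $x\ge 0$.

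For the gradient bounds \eqref{nuova2} and \eqref{flux3}, differentiating \eqref{flowbis} in $\x$ gives the matrix ODE $\partial_t\nabla\zz(t,\x)=\nabla\v(\zz(t,\x))\nabla\zz(t,\x)$ with $\nabla\zz(0,\x)=\mathbf I$; submultiplicativity of the Frobenius norm yields
\[
|\nabla\zz(t,\x)|\le\sqrt{3}+\|\nabla\v\|_\infty\!\int_0^t|\nabla\zz(s,\x)|\,ds,
\]
and Gronwall's lemma gives $|\nabla\zz(t,\x)|\le\sqrt{3}\exp(t\|\nabla\v\|_\infty)\le 3\exp(t\|\nabla\v\|_\infty)$, proving \eqref{nuova2}; plugging this estimate back into $\nabla\zz(t,\x)-\mathbf I=\int_0^t\nabla\v(\zz)\nabla\zz\,ds$ and integrating yields \eqref{flux3}.

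The delicate point is \eqref{flux1}. Compactness of $\overline\om$ and openness of $\om_*$ give $\delta:=\distance(\overline\om,\R^3\setminus\om_*)>0$; choosing $T>0$ with $T\|\v\|_\infty<\delta$, the bound $|\zz(t,\x)-\x|\le t\|\v\|_\infty$ established above ensures that $\zz(t,\x)\in\om_*$ for all $t\in[0,T]$ and $\x\in\overline\om$. For the Jacobian, Jacobi's formula together with $\partial_t\nabla\zz=\nabla\v(\zz)\nabla\zz$ produces
\[
\frac{d}{dt}\det\nabla\zz(t,\x)=\det\nabla\zz(t,\x)\cdot\tr\big((\nabla\zz)^{-1}\nabla\v(\zz)\nabla\zz\big)=(\dv\v)(\zz(t,\x))\,\det\nabla\zz(t,\x),
\]
and since $\dv\v\equiv 0$ on $\om_*$ while $\zz(t,\x)$ stays in $\om_*$ for $t\in[0,T]$, the right-hand side vanishes, so $\det\nabla\zz(t,\x)\equiv\det\mathbf I=1$. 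The only real subtlety lies in this coupling: Liouville's identity applies only where $\v$ is divergence-free, so the trajectory must first be quantitatively bounded in order to pick $T$ small enough that it never leaves $\om_*$.
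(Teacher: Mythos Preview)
Your proof is correct and follows essentially the same route as the paper: integral reformulation of the flow, Gronwall-type bounds for $\nabla\zz$, and Liouville's formula for the Jacobian once the trajectory is confined to $\om_*$. The only (harmless) differences are that you obtain \eqref{nuova1} and \eqref{flux2} via the trivial bound $|\zz(t,\x)-\x|\le t\|\v\|_\infty$ together with $e^x\ge 1$ and $e^x-1\ge x$, whereas the paper first derives the Gronwall estimate $|\zz(t,\x)-\x|\le t\|\v\|_\infty\exp(t\|\nabla\v\|_\infty)$ and feeds it back; and you use $|\mathbf I|=\sqrt 3$ in the gradient estimate where the paper simply writes $3$.
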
\KKK
\begin{proof}
\MMM
Since $\v\in C_c^1(\R^3,\R^3)$, \eqref{flowbis}
has a unique global solution  $\mathbf z\in C^1([0,+\infty)\times \R^3;\mathbb R^3)$. 
By integrating \eqref{flowbis} we have  \KKK
\begin{equation}\label{diffbis}
\left(\mathbf z(t,\x)-\x\right)-t\mathbf v(\x)=\int_0^t(\mathbf v(\mathbf z(s,\x))-\mathbf v(\x))\,ds
\end{equation}
for any $\x\in\overline\Omega$ and for every $t \ge 0$, whence
\[
|\mathbf z(t,\x)-\x|\le  t|\v(\x)|+ \|\nabla\mathbf v\|_{\infty}\int_0^t |\mathbf z(s,\x)-\x|\,ds
\]
and Gronwall lemma entails 
\begin{equation}\label{gronw}
|\mathbf z(t,\x)-\x|\le t|\mathbf v(\x)| \exp(t\|\nabla\mathbf v\|_{\infty})\le t\|\vv\|_{\infty}\exp(t\|\nabla\mathbf v\|_{\infty}).
\end{equation}
Therefore there exists $T > 0$ such that $\zz(t,\x)\in \om_*$ for every $(t,\x)\in [0,T]\times\overline\om$. Moreover, \eqref{flux1} follows by a standard argument, since $\v$ is divergence free in $\Omega_*$, see also \cite[Lemma 4.1]{MaiPe1}.
By exploiting  \eqref{diffbis} and \eqref{gronw} we get
\begin{equation*}\begin{aligned}
&\left| t^{-1}(\mathbf z(t,\x)-\x)-\mathbf v(\x)\right|	\le  \|  \v\|_{\infty}\, \int_{0}^{t}s^{-1}\,\left |{\bf z}(s,\x)-\x\right |\,ds\\&\qquad\qquad\le \|\v\|_{\infty}\int_0^t\exp(s\|\nabla\mathbf v\|_{{\infty}})\,ds\le \|\v\|_{{\infty}}(\exp(t\|\nabla\v\|_{{\infty}})-1)
\end{aligned}\end{equation*}
for any $\x\in\overline\Omega$ and any $t\in(0,T]$ thus proving \eqref{flux2}.
Letting $\nabla$ denote the derivative in the $\x$ variable we see that $\mathbf Z(t,\x):=\nabla \mathbf z(t,\x)$  satisfies
\[\left\{\begin{array}{ll} &\displaystyle\frac{\partial  { \mathbf Z}}{\partial t}(t,\x)=\nabla \v({\bf z}(t,\x))\mathbf Z(t,\x)\qquad t>0\\
&\\
& {\mathbf Z}(0,\x)=\mathbf I,
\end{array}\right.
\]
whence
\begin{equation}\label{nuovaextra}
\mathbf Z(t,\x)-\mathbf I=\int_0^t\nabla \v(\zz(s,\x))\,\mathbf Z(s,\x)\,ds
\end{equation}
for every $\x\in\overline \om$ and every $t\ge 0$, therefore
\[ |\mathbf Z(t,\x)|\le 3+\|\nabla\v\|_{\infty}\int_0^t |\mathbf Z(s,\x)|\,ds\]
and by Gronwall Lemma
\[ |\mathbf Z(t,\x)|\le 3\exp(t\|\nabla\v\|_{{\infty}}). \]
Therefore for every $t\in (0,T]$ and every $\x\in\overline\Omega$ we get
\begin{equation*}\begin{array}{ll}
\displaystyle | \mathbf Z(t,\x)-\mathbf I|&\le\displaystyle\int_0^t|\nabla \v(\mathbf z(s,\x))|\,|\mathbf Z(s,\x)|\,ds
\displaystyle\le \|\nabla\v\|_{{\infty}} \,\int_0^t |\mathbf Z(s,\x)|\,ds\\&\le\displaystyle 3\|\nabla\v\|_{{\infty}} \,\int_0^t\exp(s\|\nabla\v\|_{{\infty}})\,ds
\le \,3(\exp(t\|\nabla\v\|_{{\infty}})-1)
\end{array}\end{equation*}
%
%
thus proving \eqref{flux3}. \end{proof}

  \begin{lemma}{\bf (Upper bound)}\lab{upbd} Assume that $\partial\Omega$ has a finite number of connected components, that
 \eqref{H2 on E}, \eqref{framind}, \eqref{Z1}, \eqref{reg}, \eqref{coerc}, \eqref{winc}, \eqref{shear2}, \eqref{L1} hold true and that
$\mathcal L(\mathbf e_3)<0$.
\MMM Let $p>3$\KKK. Let  $\u\in W^{1,p}(\Omega,\mathbb R^3)\ $ such that $\dv \u=0$ a.e.  in $\om$. Then there exists a sequence
 $({\yy}_j)_{j\in\mathbb N}\subset C^1(\overline\Omega,\mathbb R^3)$ such that
\[ \limsup_{j\to +\infty} {\mathcal G}_j^I({\mathbf y}_j) \le \widetilde{\mathcal G}^I(\u).\]
\end{lemma}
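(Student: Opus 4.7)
Assume $\widetilde{\mathcal G}^I(\u) < +\infty$, so $\u \in \mathcal A$; since $p > 3$, Morrey's embedding gives a continuous representative of $\u$, for which $u_3 \ge 0$ pointwise on $E_{ess}$ (otherwise $\{u_3<0\}$ would be an open set hitting $E_{ess}$ in a subset of positive capacity, contradicting the q.e. obstacle condition). Choose a maximizer $\mathbf R \in \mathcal S_{\mathcal L, E}$ of $\mathcal L(\mathbf R \u)$ (achieved because $\mathcal S_{\mathcal L, E}$ is compact by Lemma \ref{lemma alternative}) and a minimizer $\mathbf b^* \in \R^2$ for the definition of $\mathcal I^I(\u)$. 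Set $\mathbf B^* := b^*_\alpha \mathbf e_\alpha \otimes \mathbf e_3$, so $\tr \mathbf B^* = 0$ and $\mathbf B^*\x = x_3 b^*_\alpha \mathbf e_\alpha$ vanishes on $\{x_3=0\}$. For $\sigma > 0$ to be eventually sent to $0^+$, put
\[\v^\sigma := \u + \sigma \mathbf e_3 + \mathbf B^* \x,\]
which is divergence-free in $\Omega$, lies in $W^{1,p}(\Omega,\R^3)$, satisfies $\mathbb E(\v^\sigma) = \mathbb E(\u) + \sym \mathbf B^*$, and $v_3^\sigma \ge \sigma$ on $E_{ess}$.

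By the divergence-free extension theorem of \cite{KMPT} (exploiting the finite number of connected components of $\partial\Omega$) extend $\v^\sigma$ to $\tilde \v^\sigma \in W^{1,p}(\R^3,\R^3)$ with $\dv \tilde \v^\sigma = 0$ in an open neighborhood $\Omega_*$ of $\overline\Omega$; mollification followed by a cutoff far from $\overline\Omega$ produces $\v^\sigma_k \in C^1_c(\R^3, \R^3)$ with $\dv \v^\sigma_k = 0$ in $\Omega_*$ and $\v^\sigma_k \to \v^\sigma$ uniformly on $\overline\Omega$ by Morrey, so that $v_{k,3}^\sigma \ge \sigma/2$ on $E_{ess}$ for $k$ large. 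Let $\zz^\sigma_k$ and $T^\sigma_k > 0$ be associated to $\v^\sigma_k$ by Lemma \ref{newflowlemma}. For $h_j \le T^\sigma_k$ set
\[\yy^\sigma_{k,j}(\x) := \mathbf R\,\zz^\sigma_k(h_j, \x) \in C^1(\overline\Omega, \R^3).\]
Then $\det \nabla \yy^\sigma_{k,j} \equiv 1$, and since every $\mathbf R \in \mathcal S_{\mathcal L, E}$ satisfies $\mathbf R^T \mathbf e_3 = \mathbf e_3$ by Lemma \ref{lemma alternative}, for $\x \in E$ the estimate \eqref{flux2} gives $(\yy^\sigma_{k,j})_3(\x) = (\zz^\sigma_k(h_j, \x))_3 = h_j v_{k,3}^\sigma(\x) + O(h_j^2) \ge h_j\sigma/4 > 0$, so $\yy^\sigma_{k,j} \in \mathcal A$ for $j$ large.

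By frame indifference the elastic density reduces to $\mathcal W(\x, \mathbf I + h_j \mathbf F^\sigma_k(h_j, \x))$ with $\mathbf F^\sigma_k(h_j, \x) \to \nabla \v^\sigma_k(\x)$ uniformly, by Lemma \ref{newflowlemma}; combining \eqref{regW} with dominated convergence,
\[h_j^{-2} \int_\Omega \mathcal W(\x, \nabla \yy^\sigma_{k,j}) \dx \;\to\; \int_\Omega \mathcal Q(\x, \mathbb E(\v^\sigma_k)) \dx.\]
For the load, $\mathcal L((\mathbf R - \mathbf I)\x) = 0$ (by Lemma \ref{lemma alternative} and $E \subset \{x_3=0\}$) together with \eqref{flux2} yield $h_j^{-1} \mathcal L(\yy^\sigma_{k,j} - \x) \to \mathcal L(\mathbf R \v^\sigma_k)$. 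Sending $k \to \infty$, the optimality of $\mathbf b^*$ identifies the energy limit with $\mathcal I^I(\u)$; Remark \ref{0L} yields $\mathcal L(\mathbf R \mathbf B^* \x) = 0$ because in either alternative of Lemma \ref{lemma alternative} the vector $\mathbf R \mathbf B^* \x$ is a linear combination of $x_3 \mathbf e_1$ and $x_3 \mathbf e_2$; and $\mathcal L(\mathbf R \mathbf e_3) = \mathcal L(\mathbf e_3)$. Hence
\[\limsup_{k \to \infty}\;\limsup_{j \to \infty} \mathcal G_j^I(\yy^\sigma_{k,j}) \;=\; \widetilde{\mathcal G}^I(\u) - \sigma\, \mathcal L(\mathbf e_3),\]
and since $\mathcal L(\mathbf e_3) < 0$ the right-hand side decreases to $\widetilde{\mathcal G}^I(\u)$ as $\sigma \to 0^+$. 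A standard Attouch-type diagonal extraction then yields the desired recovery sequence.

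The main obstacle is reconciling pointwise incompressibility, the unilateral obstacle condition, and the correct energy limit (involving the optimal shear $\mathbf b^*$ and the optimal rotation in $\mathcal S_{\mathcal L, E}$) simultaneously. Exact incompressibility forces the use of a measure-preserving Lagrangian flow in place of an affine perturbation of the identity, which is why Lemma \ref{newflowlemma} together with the divergence-free extension of \cite{KMPT} is essential; the obstacle condition is then only marginally preserved by the flow along $E$, which is why the preliminary translation $\u \mapsto \u + \sigma \mathbf e_3$ with $\sigma \downarrow 0$ is needed in order to produce enough strict positivity to survive the short-time flow.
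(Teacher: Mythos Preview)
Your argument is correct and follows the same overall architecture as the paper: add the optimal shear $\mathbf B^*\x$, pass to the optimal rotation in $\mathcal S_{\mathcal L,E}$, extend divergence-free via \cite{KMPT}, mollify, and build the recovery deformations as the time-$h_j$ map of the Lagrangian flow from Lemma~\ref{newflowlemma}. The identification of the limit energy and load uses exactly the same ingredients (\eqref{regW}, Remark~\ref{0L}, and $\mathcal L((\mathbf R-\mathbf I)\x)=0$ for $\mathbf R\in\mathcal S_{\mathcal L,E}$).

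The one genuine difference is how the obstacle constraint is enforced. The paper couples the mollification scale to $h_j$ via $\eps_j=h_j^{\gamma/2}$ and compensates the flow/mollification error on $E$ by an explicit vertical translation $\beta_j=o(h_j)$, thereby producing a \emph{single} recovery sequence directly; this requires the quantitative estimates \eqref{graduj}--\eqref{estgrad} and the splitting \eqref{sommaesottrai}. You instead introduce an auxiliary strict margin $\sigma\mathbf e_3$, keep the mollification index $k$ independent of $j$, and pass to the limit in the order $j\to\infty$, $k\to\infty$, $\sigma\to 0^+$ followed by a diagonal extraction; the extra cost $-\sigma\mathcal L(\mathbf e_3)>0$ vanishes in the final limit. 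Your route is conceptually cleaner and avoids the delicate balance between $\eps_j$ and $h_j$, at the price of a triple-limit diagonalization; the paper's route yields an explicit one-parameter recovery sequence but needs sharper intermediate estimates. Both are valid and lead to the same conclusion.
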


\begin{proof}{\bf Step 1.}
We assume without loss of generality that $\u\in \mathcal A$ and we let
$$\mathbf b^*\in \argmin \left\{\int_\om \mathcal Q(\x, \mathbb E(\u)+\frac{1}{2}b_{\alpha}(\mathbf e_{\alpha}\otimes\mathbf e_3+\mathbf e_{3}\otimes\mathbf e_\alpha))\,d\x: \mathbf b\in \mathbb R^2\right\},$$
\beeq\lab{utilde}\widetilde\u(\x):= \u(\x)+x_3(b^*_1\mathbf e_1+b^*_2\mathbf e_2).\eneq
It is readily seen that $\widetilde\u\in \mathcal A$, that $\dv\widetilde \u=0$ a.e. in $\Omega$ and  that $\mathbb E(\widetilde\u)=\mathbb E(\u)+\frac{1}{2}b_{\alpha}^*(\mathbf e_{\alpha}\otimes\mathbf e_3+\mathbf e_{3}\otimes\mathbf e_\alpha)$,
hence
 \beeq\lab{Ju}\mathcal I(\u)=\int_\om \mathcal Q(\x, \mathbb E(\widetilde\u))\,d\x.\eneq
Moreover, by  Lemma \ref{load0} and Remark \ref{0L} we obtain
\beeq\lab{Lu}\mathcal L(\mathbf R \widetilde\u)=\mathcal L(\mathbf R\u)+\mathcal L(x_3(b^*_1\mathbf R\mathbf e_1+b^*_2\mathbf R\mathbf e_2))=\mathcal L(\mathbf R\u)
\qquad
\forall\, \mathbf R\in \mathcal S_{\mathcal L, E}.\eneq
Therefore by choosing
$$\widetilde{\mathbf R}\in \argmin\left\{ 
-\mathcal L(\mathbf R\widetilde\u): \mathbf R\in \mathcal S_{\mathcal L, E}\right\}$$
we get
\begin{equation}\label{uutilde}\widetilde{\mathcal G}^I(\u)=\int_\om \mathcal Q^I(\x,\mathbb E(\widetilde\u))\,d\x-\mathcal L(\widetilde{\mathbf R}\widetilde\u).\end{equation}

Since  $\partial\Omega$ has a finite number of connected components and $\dv \widetilde\u=0$ a.e. in $\om$ then by \cite[Proposition 3.1, Corollary 3.2]{KMPT} there exists a Sobolev  extension of $\widetilde\u$, still denoted with $\widetilde\u\in W^{1,p}(\R^3;\mathbb R^3)$, such that  $\spt\widetilde\u$ is compact
and $\dv\widetilde\u= 0$  in an open neighborhood $\om'$ of $\overline\om$. Since $\widetilde\u\in W^{1,p}(\R^3;\mathbb R^3)$ with $p > 3$ and $\spt\widetilde\u$ is compact
 then $\widetilde\u\in C^{0,\gamma}(\R^3;\R^3)\cap L^{\infty}(\R^3;\R^3)$ \MMM with $\gamma=1-3/p$  \KKK and we denote with 
 $$\|\widetilde\u\|_{0,\gamma}:=\sup_{\mathbb R^3}|\widetilde\u|+\sup_{{{\x\in\mathbb R^3,  \mathbf y\in\mathbb R^3}\atop{ \x\neq \mathbf y}}}\frac{|\widetilde \u(\x)-\widetilde \u(\mathbf y)|}{|\x-\mathbf y|^\gamma}$$
  its H\"older norm.
 Let now, for every $j\in\mathbb N$, $\widetilde\u_j:=\widetilde\u*\rho_j$, where $\rho_j(\x):=\eps_j^{-3}\rho(\eps_j^{-1}|\x|)$ and $\rho$ is the unit symmetric mollifier, so  that  $\spt\rho\subset B_{1}(0)$, and 	\MMM
 $$\eps_j:=h_j^{\gamma/2}.$$ 	\KKK
  It is well known that $\widetilde\u_j\in C^1_c(\R^3;\R^3)$, that $\widetilde\u_j\to \widetilde\u$  in $W^{1,p}(\R^3;\mathbb R^3)$, that $\displaystyle\|\widetilde\u_j\|_{L^\infty}\le \|\widetilde\u\|_{L^\infty}$ and that
 \beeq\lab{graduj}\displaystyle\|\nabla\widetilde\u_j\|_{L^\infty}\le \|\widetilde\u\|_{L^\infty}\!\!\int_{B_{\eps_j}(0)}\!\!|\nabla\rho_j(\yy)|\,d\yy=K\eps_j^{-1}\|\widetilde\u\|_{L^\infty},\eneq
 where we have set
 $$ K:= 4\pi\!\!\int_0^1|\rho'(r)|r^2\,dr.$$
  Moreover it is readily seen that $\dv\widetilde\u_j=0$  in an open neighbourhood $\om_*$ of $\overline\om$ such that $\overline{\om_*}\subset\om'$  and that  for every $\xx,\mathbf q\in \R^3$ we have
 \beeq\lab{estsup}\begin{aligned}
\displaystyle |\widetilde\u_j(\x)-\widetilde\u(\x)|&\le \int_{B_{\eps_j}(0)}|\widetilde\u(\x-\yy)-\widetilde\u(\x)|\rho_j(\yy)\,d\yy\\
&\displaystyle\le \|\widetilde\u\|_{0,\gamma}\int_{B_{\eps_j}(0)}|\yy|^{\gamma}|\rho_j(\yy)\,d\yy\le \eps_j^{\gamma}\|\widetilde\u\|_{0,\gamma},\\
\end{aligned}
\eneq

\beeq\lab{estgrad}\begin{aligned}
\displaystyle |\nabla\widetilde\u_j(\x)-\nabla\widetilde\u_j(\mathbf q)|&\le \int_{B_{\eps_j}(0)}|\widetilde\u(\x-\yy)-\widetilde\u(\mathbf q-\yy)||\nabla\rho_j(\yy)|\,d\yy\\
&\displaystyle\le \|\widetilde\u\|_{0,\gamma}|\xx-\mathbf q|^{\gamma}\int_{B_{\eps_j}(0)}|\nabla\rho_j(\yy)|\,d\yy=\MMM  K\eps_j^{-1}\KKK\|\widetilde\u\|_{0,\gamma}|\xx-\mathbf q|^{\gamma}.
\end{aligned}
\eneq

{\bf Step 2.}
Let now $\zz_j$ be the Lagrangian flow associated to the vector field $\widetilde\u_j$ as in Lemma \ref{newflowlemma}.
 Let
\beeq\lab{zj}   \ww_j(\x):=\zz_j(h_j,\x),\qquad \yy_j(\x):=\widetilde {\mathbf R}\ww_j(\x)\MMM+\beta_j\mathbf e_3\KKK,\eneq
where
\[
\beta_j=\beta_j(\|\widetilde \u\|_{0,\gamma}):=h_j\|\widetilde \u\|_{0,\gamma}\left(\eps_j^\gamma+\exp\left(Kh_j\eps_j^{-1}\|\widetilde\u\|_{0,\gamma}\right)-1\right).
\]
\MMM By \eqref{nuova1} and by \eqref{graduj}  
 we obtain
$$ \sup_{\x\in\overline\Omega}|\zz_j(h_j,\x)-\x|\le h_j\|\widetilde\u\|_{{\infty}}\exp\{K\|\widetilde\u\|_{{\infty}}h_j\eps_j^{-1}\}\quad\mbox{for every $j$,}$$
  where the right hand side goes to zero as $j\to+\infty$ since $h_j\eps_j^{-1}\to 0$, so that
 for every large enough $j$ we obtain
 $$\zz_j(h_j,\x)\in \om_*\quad\mbox{ for every $\x\in\overline \Omega$}$$ and, as in the proof of Lemma \ref{newflowlemma},  
$$\det\nabla \ww_j(\x)=\det\nabla \yy_j(\x)=1\quad\mbox{ for every $\x\in\overline \Omega$}.$$ \KKK 
\MMM
Moreover, still referring to the proof of Lemma \ref{newflowlemma}, and in particular to \eqref{nuovaextra}\KKK, we obtain \MMM
\begin{equation}\label{sommaesottrai}\begin{aligned}
&\left|h_j^{-1}(\nabla \zz_j(h_j,\x)-\mathbf I)-\nabla \widetilde\u_j(\x)\right|\le h_j^{-1}\int_0^{h_j}|\nabla\widetilde \u_j(\x)\nabla\zz_j(s,\x)-\nabla\widetilde\u_j(\x)|\,ds\\
&\qquad+h_j^{-1}\int_0^{h_j}|\nabla \widetilde\u_j(\zz_j(s,\x))\nabla\zz_j(s,\x)-\nabla\widetilde\u_j(\x)\nabla\zz_j(s,\x))|\,ds=:M_j(\x)+N_j(\x)
\end{aligned}\end{equation}
for every $\x\in\overline\om$ and we next estimate both terms:  from \eqref{flux3} and \eqref{graduj} we get
\[
\begin{aligned}
M_j(\x)&\le  h_j^{-1}\|\nabla\widetilde \u_j\|_\infty\int_0^{h_j}|\nabla\zz_j(s,\x)-\mathbf I|\,ds\le h_j^{-1}\|\nabla\widetilde \u_j\|_\infty\int_0^{h_j}3\left(\exp(s\|\nabla\widetilde \u_j\|_\infty)-1\right)\,ds\\&\le 3K\eps_j^{-1}\|\widetilde \u\|_\infty\left(\exp(Kh_j\eps_j^{-1}\|\widetilde\u\|_\infty)-1\right)
\end{aligned}
\]
while \eqref{estgrad}, 	\eqref{nuova1} and \eqref{nuova2} \MMM   entail
\[
\begin{aligned}
N_j(\x)&\le h_j^{-1}\int_0^{h_j}|\nabla\zz_j(s,\x)||\nabla\widetilde \u_j(\zz_j(s,\x))-\nabla\widetilde \u_j(\x)|\,ds\\
&\le 3K\eps_j^{-1}\|\widetilde\u\|_{0,\gamma} h_j^{-1}\int_0^{h_j}\exp(s\|\nabla\widetilde\u_j\|_\infty)\,|\zz_j(s,\x)-\x|^\gamma\,ds\\
&\le 3K\eps_j^{-1}\|\widetilde\u\|_{0,\gamma}\|\widetilde \u\|_\infty^\gamma h_j^{-1}\int_0^{h_j}s^\gamma\,\exp(s(1+\gamma)\|\nabla\widetilde\u_j\|_\infty)\,ds\\
&\le 3K\|\widetilde \u\|_{0,\gamma}^{1+\gamma}h_j^{\gamma}\eps_j^{-1} \,\exp\left(K(1+\gamma)h_j\eps_j^{-1}\|\widetilde\u\|_\infty\right).
\end{aligned}
\]
Since $\eps_j=h_j^{\gamma/2}$ and $\gamma\in(0,1)$, we se that both $\sup_{\x\in\overline\om}M_j(\x)$ and $\sup_{\x\in\overline\om}N_j(\x)$ vanish as $j\to+\infty$, so that \KKK
 by setting $\widetilde\vv_j(\x):= h_j^{-1}(\mathbf z_j(h_j,\x)-\x)$, from \eqref{sommaesottrai} we get
$\nabla\widetilde\vv_j-\nabla \widetilde\u_j\to 0$ in  $L^{\infty}(\om;\R^{3\times 3})$,
thus
$\nabla\widetilde\vv_j\to \nabla\widetilde\u$ in $ L^p(\om;\R^{3\times 3}),$
and by taking \eqref{flux2} into account we have $\widetilde \v_j\to\widetilde \u$ in $L^\infty(\Omega;\mathbb R^3)$ so that
$$\widetilde\vv_j\to \widetilde\u\qquad \hbox{in}\ W^{1,p}(\om;\R^{3}).$$
\MMM We also have $h_j\|\nabla\widetilde \v_j\|_\infty\to 0$, due to \eqref{graduj}, since $h_j\eps_j^{-1}\to 0$. \KKK

{\bf Step 3}. We check that $\mathbf y_j\in\mathcal A$. \MMM
By taking account that $\widetilde{\mathbf R}\in \mathcal S_{\mathcal L,E}\subset \{\mathbf R\in SO(3): \mathbf R\mathbf e_3=\mathbf e_3\}$, along with \eqref{flux2} and  \eqref{graduj}-\eqref{estsup}, we have from \eqref{zj}
\[\begin{aligned}
y_{j,3}-x_3&=w_{j,3}-x_3+\beta_j \ge h_j\widetilde u_{j,3} -h_j\|\widetilde \u\|_{{\infty}}(\exp(h_j\|\nabla\widetilde \u_j\|_{{\infty}})-1)+\beta_j\\
&=h_j(\widetilde u_{j,3}-\widetilde u_3)-h_j\|\widetilde \u\|_{{\infty}}(\exp(h_j\|\nabla\widetilde \u_j\|_{{\infty}})-1)+h_j\widetilde u_3+\beta_j\\
&\ge -h_j\eps_j^\gamma\|\widetilde \u\|_{0,\gamma}-h_j\|\widetilde \u\|_{0,\gamma}(\exp(Kh_j\eps_j^{-1}\|\widetilde \u\|_{0,\gamma})-1)+h_j\widetilde u_3+\beta_j=h_j\widetilde u_3
\end{aligned}\]
for every $j$,
where the last equality follows from the definition of $\beta_j$. Hence, by recalling that $\widetilde u_3^*(\x)\ge 0$ for q.e. $\x\in E\subset \{x_3=0\}$, 
 we deduce $y^*_{j,3}\ge 0$ for q.e. $\x\in E$, that is $\mathbf y_j\in\mathcal A$ for every $j$.
\KKK

 
 {\bf Step 4.}
 We conclude by noticing that Lemma \ref{lemma alternative} entails
$\mathcal L(\widetilde{\mathbf R}\x-\x)=0$,
whence
\[\begin{aligned}
h_j^{-1}\mathcal L(\yy_j-\x)&=h_j^{-1}\mathcal L(\widetilde{\mathbf R}\w_j-\x)+h_j^{-1}\beta_j\mathcal L(\mathbf e_3)\\
&=h_j^{-1}\mathcal L(\widetilde{\mathbf R}\zz_j(h_j,\x)-\x)+o(1)=\mathcal L(\widetilde{\mathbf R}\widetilde\v_j)+o(1)\quad\mbox{as $j\to+\infty,$}
\end{aligned}\]
where we have used the fact that $h_j^{-1}\beta_j=o(1)$ as $j\to+\infty$, which follows from the definitions of $\eps_j$ and $\beta_j$. 
So by exploiting \MMM that $h_j\|\nabla\widetilde \v_j\|_\infty\to 0$ along with  \eqref{regW}, \KKK by taking account that for \MMM large enough $j$ there holds $\det\nabla\mathbf y_j=\det(\mathbf I+h_j\nabla\widetilde\v_j)=1$ \KKK in $\om$ and that  $\dv\widetilde\uu=0$ a.e. in $\om$, we get
\[\begin{aligned}
\displaystyle\limsup_{j\to+\infty}\mathcal G_{j}^I(\mathbf y_j)&\le
 \limsup_{j\to+\infty}\int_{\Omega} {h_j}^{-2}\left(\mathcal W(\x,\mathbf I+h_j\nabla\widetilde\v_j)- \mathcal Q(\x,\mathbb E(h_j\widetilde\v_j))\right)\,d\x\\&\qquad+
\displaystyle \limsup_{j\to+\infty}\left(h_j^{-2}\int_{\Omega} \mathcal Q(\x,\mathbb E(h_j\widetilde\v_j))\,d\x-h_j^{-1}\mathcal L(\mathbf y_j-\x)\right)\\
&\displaystyle \le \limsup_{j\to+\infty}\int_{\Omega}\omega(h_j|\nabla\widetilde\v_j|)|\nabla\widetilde\v_j|^2\,d\x+\limsup_{j\to+\infty}\int_{\Omega} \mathcal Q(\x,\mathbb E(\widetilde\v_j))\,d\x-\mathcal L(\widetilde{\mathbf R}\widetilde\u)\\
&\displaystyle = \int_{\Omega} \mathcal Q^I(\x,\mathbb E(\widetilde\uu))\,d\x-\mathcal L(\widetilde{\mathbf R}\widetilde\u),\end{aligned}
\]
which proves the result \MMM in view of \eqref{uutilde}.\KKK
\end{proof}
 
We are now in a position to prove our main theorem.
\begin{proofad1}
 If $(\overline{\mathbf y}_j)_{j\in\mathbb N}\subset H^1(\om,\mathbb R^3)$ is a minimizing sequence for $\mathcal G_j^I$ then 
 we may assume that $\mathcal G_j^I(\overline{\mathbf y}_j)\le \mathcal G_j^I(\mathbf x)+1=1$ for every $j$. 
Moreover,
if
 $\mathbf R_j$ belong to $\mathcal A(\overline{\mathbf y}_j)$   and $\overline{\mathbf c}_j$ is defined \MMM by the right hand sides of \eqref{calfa}, \eqref{c3}, then Lemma \ref{lemma compactness} \KKK entails that the sequence 
\begin{equation*} \vspace{-0.1cm}
 {\overline\u}_j(\x):={h_j}^{-1}\mathbf R_j^T\left\{\big({\overline\yy}_j\,-\,\mathbf R_j\xx\,-{\overline{\mathbf c}}_j\,\big)_{\alpha}\mathbf e_{\alpha}+({\overline y}_{j,3}-x_3)\mathbf e_3\right\}
\end{equation*} 
is bounded in $H^1(\om;\mathbb R^3)$. Therefore up to subsequences ${\overline\u}_j\to \overline\u$ weakly in $H^1(\Omega;\R^3),$ so, by Lemma~\ref{Lemma 3.8}, we have $\overline\u\in \mathcal A$, $\dv \overline \u=0$ a.e. in $\om$ and
\begin{equation}\lab{liminfextra}
\displaystyle\liminf_{j\to +\infty}{\mathcal G}^I_j({\overline{\mathbf y}}_j)\ge \widetilde{\mathcal G}^I(\overline\u). 
\end{equation} 
On the other hand, by Lemma \ref{upbd},  for every $\u_*\in W^{1,p}(\Omega,\mathbb R^3)\cap\mathcal A$ with  $p > 3,$ there exists 
  a sequence  $\mathbf y_j\in C^1(\overline\Omega,\mathbb R^3)$ such that
\beeq\label{newlim} \limsup_{j\to +\infty} {\mathcal G}_j^I(\mathbf y_j) \le \widetilde{\mathcal G}^I(\u_*).\eneq
Since
\beeq\lab{infGj}{\mathcal G}^I_j(\overline\yy_j)+o(1)=\inf_{H^1(\om,\mathbb R^3)}\mathcal G^I_{j}\le {\mathcal G}^I_j(\mathbf y_j)\qquad \mbox{as $j\to+\infty,$}
\eneq
 by passing to the limit as $j\to +\infty$, we get 
 \beeq\lab{reglimsup}\widetilde{\mathcal G}^I(\overline\u)\le\widetilde {\mathcal G}^I(\u_*)
 \qquad \mbox{ for every $\u_*\in W^{1,p}(\Omega,\mathbb R^3)\cap\mathcal A$ with  $p > 3$}.
 \eneq

Now fix  $\u\!\in\! \mathcal A$ such that $\dv \u=0$ a.e. in $\om$ and denote again  by $\u$ a Sobolev extension of $\u$ to the whole $\mathbb R^3$. By  \cite[Lemma A.11]{MPTOB}  there exists 
 $\v_j\in C^1(\R^3,\mathbb R^3)\cap\mathcal A$ such that $\v_j\to \u$ in $H^1(\om;\mathbb R^3)$ and by Bogowskii's Theorem \cite{BOG}, see also \cite[Theorem A.2]{JS},   there exists $\ww_j$ belonging to $W^{1,r}_0(\om;\R^3)$ for every $r\ge1$ \KKK such that
 \beeq\lab{bog1}
 \displaystyle \dv\ww_j=-\dv\v_j+|\om|^{-1}\int_\om \dv\v_j\,d\x\quad\hbox{in}\ \om,\eneq and
\beeq\lab{bog2}
\displaystyle \|\ww_j\|_{H^1(\om;\R^3)}\le C_*\, \left\|-\dv\v_j+|\om|^{-1}\int_\om \dv\v_j\,d\x\right\|_{L^2(\om)}
 \eneq
 for some suitable constant $C_*=C_*(\om)$. 
Therefore by setting
\[\displaystyle \u_j:= \v_j-\left (|\om|^{-1}\int_\om \dv\v_j\,d\x\right)x_3\mathbf e_3+\ww_j\]
 and by taking account of \eqref{bog1} we get $\dv\u_j=0$ a.e. in $\om$. {Moreover since $E\subset \partial\om\cap \{x_3=0\}$ and $\u_j,\ \v_j\in C^{0}(\overline\om;\R^3)$ we have  $\u_j(\x)=\v_j(\x)$ for every $\x\in E$, hence $\u_j\in \mathcal A$.} Eventually by \eqref{bog2} we get $\ww_j\to 0$ in $H^1(\om;\R^3)$. By recalling that   $\v_j\to \u$ in $H^1(\om;\R^3)$ and that $\dv\v_j\to 0$ in $L^2(\om)$, we have
$$\|\u_j-\u\|_{H^1(\om;\R^3)}\le \left\|\v_j-\u-\left (|\om|^{-1}\int_\om \dv\v_j\,d\x\right)x_3\mathbf e_3\right\|_{H^1(\om;\R^3)}+\|\ww_j\|_{H^1(\om;\R^3)}\to 0.$$
 By \eqref{reglimsup} we have
$\widetilde{\mathcal G}^I(\overline\u)\le \widetilde{\mathcal G}^I(\u_j),$
whence by Remark \ref{cont} we have
$$\displaystyle\widetilde{\mathcal G}^I(\overline\u)\le \lim_{j\to +\infty}\widetilde{\mathcal G}^I(\u_j)=\widetilde{ \mathcal G}^I(\u).
$$
The arbitrariness of $\u\in\mathcal A$ shows that
 $\overline\u\in\argmin_{H^{1}(\om;\R^3)} \widetilde{\mathcal G}^I$.

\MMM
We claim that  ${\mathcal G}_j^I(\overline{\mathbf y}_j)\to \widetilde{\mathcal G}^I(\overline\u)$:
indeed, by the previous arguments, if $p>3$ we have $\min_{H^{1}(\om;\R^3)}\widetilde{\mathcal G}^I=\inf_{W^{1,p}(\om;\R^3)}\widetilde{\mathcal G}^I$.  
Since \eqref{liminfextra}, \eqref{newlim}, \eqref{infGj} and \eqref{reglimsup} imply
\[
\min_{H^1(\om;\R^3)}\widetilde{\mathcal G}^I=\widetilde{\mathcal G}^I(\overline \u)\le \liminf_{j\to+\infty}\widetilde{\mathcal G}_j^I(\overline\yy_j)\le\limsup_{j\to+\infty}\widetilde{\mathcal G}_j^I(\overline\yy_j)\le \widetilde {\mathcal G}^I(\u_*)
\]
for every $\u_*\in \mathcal A$ such that $\u_*\in W^{1,p}(\om;\R^3)$, the claim follows.

\KKK

%

We are only left to show  that
$\min_{H^1(\om;\R^3)}\widetilde{\mathcal G}^I =\min_{H^1(\om;\R^3)}\mathcal G^I.$
To this aim we show first that for every $\u\in \mathcal A$ there exists $\widetilde\u\in \mathcal A$ such that $\mathcal G^I(\widetilde \u)=\widetilde{\mathcal G}^I(\u)$. Indeed if $\widetilde\u$ is defined as in \eqref{utilde} then by \eqref{Ju} and \eqref{Lu} we get
\beeq\label{G=Gtilde}\displaystyle \widetilde{\mathcal G}^I(\u)=\mathcal I^I(\u)-\max_{\mathbf R\in \mathcal S_{\mathcal L, E}}\mathcal L(\mathbf R\u)=
 \int_\om \mathcal Q^I(\x,\mathbb E(\widetilde\u))\,d\x-\max_{\mathbf R\in \mathcal S_{\mathcal L, E}}\mathcal L(\mathbf R\widetilde\u)=\mathcal G^I(\widetilde\u)
\eneq
as claimed. By recalling that  $\widetilde{\mathcal G}^I(\overline\u)=\min_{H^1(\om;\R^3)} \widetilde{\mathcal G}^I\le \inf_{H^1(\om;\R^3)} \mathcal G^I$ let us assume that inequality is strict. Then
 by \eqref{G=Gtilde} there exists $\widetilde{{\overline\u}}\in \mathcal A$ such that $\mathcal G^I(\widetilde{{\overline\u}})=\widetilde{\mathcal G}^I(\overline\u) < \inf_{H^1(\om;\R^3)} \mathcal G^I$, a contradiction. Thus again by~\eqref{G=Gtilde} $\mathcal G^I(\widetilde{{\overline\u}})=\widetilde{\mathcal G}^I(\overline\u)= \min_{H^1(\om;\R^3)} \mathcal G^I$.
\end{proofad1}

\section*{Acknowledgements}  \noindent
 EM and DP are supported by the MIUR-PRIN project 202244A7YL. They  are members of the GNAMPA group of the Istituto Nazionale di Alta Matematica (INdAM).

 \end{document}